\documentclass[a4paper,11pt]{article}

\usepackage{amsmath}
\usepackage{amssymb}
\usepackage{amsthm}
\usepackage{mathtools}
\usepackage[OT4]{fontenc}
\usepackage[utf8]{inputenc}
\usepackage[margin=3cm]{geometry}
\usepackage{graphicx}
\usepackage{xurl}
\usepackage[numbers,square]{natbib}
\usepackage[unicode,debug,colorlinks=true,linkcolor=black,citecolor=black,urlcolor=black,pdfusetitle=true,breaklinks=true]{hyperref}
\allowdisplaybreaks

\newcommand{\abs}[1]{\left| #1 \right|}
\newcommand{\okra}[1]{\left( #1 \right)}
\newcommand{\kwad}[1]{\left[ #1 \right]}
\newcommand{\klam}[1]{\left\{ #1 \right\}}
\newcommand{\boole}[1]{{\mathbf 1}{\klam{#1}}}
\newcommand{\ceil}[1]{\left\lceil #1 \right\rceil}

\DeclareMathOperator*{\argmax}{arg\, max}

\DeclareMathOperator{\card}{card}

\DeclareMathOperator{\mean}{\mathbf{E}}

\newtheorem{definition}{Definition}

\newtheorem{theorem}{Theorem}
\newtheorem{proposition}{Proposition}

\hyphenation{para-meter-iz-ation}
\hyphenation{over-para-meter-iz-ation}
\hyphenation{under-para-meter-iz-ation}

\begin{document}

\title{From Zipf's Law to Neural Scaling through\\ Heaps' Law and
  Hilberg's Hypothesis}

\author{{\L}ukasz D\k{e}bowski%
  \thanks{{\L}ukasz D\k{e}bowski is with the Institute of Computer
    Science, Polish Academy of Sciences, ul. Jana Kazimierza 5, 01-248
    Warszawa, Poland, e-mail: ldebowsk@ipipan.waw.pl.}%
}

\date{}

\begin{titlepage}

\maketitle

\begin{abstract}
  We inspect the deductive connection between the neural scaling law
  and Zipf's law---two statements discussed in machine learning and
  quantitative linguistics. The neural scaling law describes how the
  cross entropy rate of a foundation model---such as a large language
  model---changes with respect to the amount of training tokens,
  parameters, and compute. By contrast, Zipf's law posits that the
  distribution of tokens exhibits a power law tail.  Whereas similar
  claims have been made in more specific settings, we show that the
  neural scaling law is a consequence of Zipf's law under certain
  broad assumptions that we reveal systematically.  The derivation
  steps are as follows: We derive Heaps' law on the vocabulary growth
  from Zipf's law, Hilberg's hypothesis on the entropy scaling from
  Heaps' law, and the neural scaling from Hilberg's hypothesis. We
  illustrate these inference steps by a toy example of the Santa Fe
  process that satisfies all four statistical laws.
  \\[1em]
  \textbf{Key words}: neural scaling law, Zipf's law, Heaps' law,
  Hilberg's hypothesis, Santa Fe processes
  \\[1em]
  \textbf{MSC 2020:} 94A17, 60G10, 91F20, 68T07
\end{abstract}
\end{titlepage}

\maketitle

\section{Introduction}
\label{secIntroduction}

It has been increasingly recognized in the machine learning literature
\citep{Hutter21, MaloneyRobertsSully22, MichaudOther23,
  CabannesDohmatobBietti24, NeumannGros25, PanOther25} that the
\emph{neural scaling law} observed for contemporary foundation
models---such as large language models---may arise as a consequence of
\emph{Zipf's law} or similar distributional regularities of natural
language. This emerging perspective suggests that some remarkable
empirical regularities in large-scale deep learning need not be
explained solely by architectural, optimization, or hardware
considerations, but instead reflect intrinsic statistical constraints
of \emph{natural texts}---human linguistic data.

In the present paper, we aim to provide an actually simple but
systematic derivation of this chain of implications that extends
earlier results in information theory, probability theory, and
quantitative linguistics. We will prove formal theorems about as
general stochastic processes as possible rather than experiment with
particular empirical data. Our goal is to consolidate knowledge of
several scientific disciplines by means of mathematical deduction. We
try to avoid a too abstract formalism to make this paper more
accessible.

The conceptual trajectory is as follows. Beginning with Zipf's law for
word frequencies, we derive \emph{Heaps' law} for vocabulary
growth. From Heaps' law we extract \emph{Hilberg's hypothesis} on the
sublinear growth of block entropy.  Finally, we show that Hilberg's
hypothesis leads to the \emph{neural scaling} that ties model
performance to the amounts of training data, model parameters, and
training compute. In parallel, we discuss \emph{Santa Fe
  processes}---toy stochastic sources that exhibit these statistical
laws.  Each link in this chain introduces its own assumptions. By
isolating these steps, we hope to illuminate where the derivations
might be strengthened in the future.

\paragraph{Further organization of the article.}

In Section~\ref{secPreliminaries}, we map preliminaries and the
research context, including basic notation, power laws at large, and
prior theoretical connections. %
Section~\ref{secResults} contains the formal statement of the main
theorems, preceded by a discussion of their novelty and followed by a
consideration of their relevance. %
Section~\ref{secConclusion} concludes the article. %
The paper is followed by two appendices. %
Appendix \ref{secBitsTypes} contains general results about counting
types and bits of information for diverse stochastic sources.  Proofs
of the main theorems can be found in Appendix \ref{secImplications}.

\section{Preliminaries and context}
\label{secPreliminaries}

This section begins with an explanation of mathematical notation, to
be followed by a discussion of empirical power laws in quantitative
linguistics and machine learning, to be concluded by the attempts of
relating them formally.

\subsection{Basic notations}
\label{secNotations}

Consider a countable alphabet $\mathbb{X}$, which need not be
finite. All considered random variables live on the measurable space
of doubly infinite sequences $(\Omega,\mathcal{J})$, where
$\Omega=\mathbb{X}^{\mathbb{Z}}$ and $\mathcal{J}\subset 2^{\Omega}$
is the smallest $\sigma$-field generated by cylinder sets. Function
$P:\mathcal{J}\to[0,1]$ is the ``true'' probability measure that need
not have a finite description or be computable.

\paragraph{Tokens and types.}
Notation $(X_t)_{t\in\mathbb{Z}}$ denotes an integer-\-time stochastic
process consisting of countable-\-alphabet random variables
$X_j:\Omega\ni(x_t)_{t\in\mathbb{Z}}\mapsto x_j\in\mathbb{X}$.  As in
quantitative linguistic research, we distinguish tokens and types.
That is, string
\begin{align}
  X_j^k:=(X_j,X_{j+1},\ldots,X_k)
\end{align}
consists of random elements called tokens, whereas calligraphically
written set
\begin{align}
  \mathcal{X}_j^k:=\klam{X_j,X_{j+1},\ldots,X_k}
\end{align}
consists of random elements called types (i.e., unique tokens). We
adopt that $X_1^0$ is the empty string and $\mathcal{X}_1^0$ is the
empty set.  Analogous notations are applied to processes denoted by
other letters such as $(K_t)_{t\in\mathbb{Z}}$ or
$(Z_k)_{k\in\mathbb{N}}$.

\paragraph{Information measures.}
Expectation with respect to the ``true'' probability measure $P$ is
denoted $\mean X:=\int X dP$.  Random variable $P(X)$ for a random
variable $X$ is defined as
\begin{align}
  P(X)(\omega):=P(X=x)\iff X(\omega)=x.
\end{align}
This convention allows us to shorten formulas by writing $P(X)$
instead of $P(X=x)$.  In particular, notation
\begin{align}
  H(X):=\mean(-\log P(X))
\end{align}
denotes the Shannon entropy of a simple random
variable $X$. The conditional entropy is $H(X|Z):=H(X,Z)-H(Z)$, the
mutual information is $I(X;Y):=H(X)-H(X|Y)$, and the conditional
mutual information is $I(X;Y|Z):=H(X|Z)-H(X|Y,Z)$.

\paragraph{Extension to $\sigma$-fields.}
For random variables that are not simple, i.e., assume an infinite
number of values, we use a convenient generalization of the above
Shannon information measures to arbitrary $\sigma$-fields. For a
textbook treatment of their calculus, see \citep[Sections 5.3 and
5.4]{Debowski21}. The original contributions were made by
\citet{GelfandKolmogorovYaglom56en}, \citet{Dobrushin59en},
\citet{Pinsker60en}, \citet{Wyner78}, and \citet{Debowski09,
  Debowski20}.

\paragraph{Random measures.}
Symbol $Q$ most often denotes a random probability measure on the
measurable space $(\Omega,\mathcal{J})$, which is ``learned'' from a
finite sample such as $X_j^k$, i.e., $Q=f(X_j^k)$ for a certain
(stochastic) function $f$. Mind that the entropy of $Q$ equals
\begin{align}
  H(Q)=-\sum_q P(Q=q)\log P(Q=q),
\end{align}
where $q$ are distinct values (concrete measures) that $Q$ assumes. In
particular, it is not the entropy rate of process
$(X_t)_{t\in\mathbb{Z}}$ with respect to $Q$. In contrast to $P$,
values of measure $Q$ are often computable objects, expressed via a
finite number of finite-resolution parameters. In particular,
$H(Q)\le H(X_j^k)$ for $Q=f(X_j^k)$ and a deterministic function $f$.

\paragraph{Expected cardinalities.}
An important analogue of Shannon entropy in our considerations is the
expected cardinality of a random finite set $\mathcal{X}$ denoted
\begin{align}
  V(\mathcal{X}):=\mean\card\mathcal{X}.
\end{align}
Respectively, the analogue of conditional entropy $H(X|Z)=H(X,Z)-H(Z)$
is the expected cardinality of set difference
$V(\mathcal{X}\setminus\mathcal{Z})
=V(\mathcal{X}\cup\mathcal{Z})-V(\mathcal{Z})$ and the counterpart of
mutual information $I(X;Y)=H(X)+H(Y)-H(X,Y)$ is the expected
cardinality of intersection
$V(\mathcal{X}\cap \mathcal{Y}) = V(\mathcal{X}) + V(\mathcal{Y}) -
V(\mathcal{X}\cup \mathcal{Y})$, cf.\ the idea of $I$-measure
\citep{Hu62, Yeung02}. However, this is an incomplete analogy since in
general there is no random variable $Z=f(X,Y)$ such that
$H(Z)=I(X;Y)$, cf.\ \citep{GacsKorner73,Wyner75}.

\subsection{Power laws at large}
\label{secPowerLaws}

Let us briefly present four power laws to be related in our formal
reasoning---in the chronological order of their discovery.

\paragraph{Zipf's law.}
The oldest known of quantitative linguistic laws, Zipf's law asserts
that, for texts in natural language, the frequency of the $k$-th most
common type of word decays approximately as $k^{-\alpha}$, where
$\alpha\approx 1$. This regularity was noticed over one century ago
\citep{Estoup16, Condon28, Zipf35}.  An empirical study of this law
across one hundred languages can be found in \citep{MehriJamaati17}.
Similar power-law distributions are observed also in ecology,
sociology, economics, and physics \citep{Zipf49}, being a hallmark of
complex systems. The literature on Zipf's law is scattered over
diverse venues. Historical references have been compiled by
\citet{LiWWW}. 

It has been known that Zipf's law is a stylized fact rather than
follows simple exact formulas.  Despite these complications, we will
assume that stationary processes $(K_t)_{t\in\mathbb{Z}}$ over natural
numbers with the approximate Zipf probability distribution
\begin{align}
  \label{Zipf}
  A_Z^- k^{-\alpha_Z}\le P(K_t=k)&\le A_Z^+ k^{-\alpha_Z},
  & \alpha_Z&>1, & A_Z^\pm&>0,
\end{align}
provide a compact description of heavy-tailed data
distributions. Statistical law (\ref{Zipf}), called also the
Zipf-Mandelbrot law after \citep{Mandelbrot54}, will be our departure
point.

\paragraph{Heaps' law.}
A closely related law, Heaps' law, also known as Herdan's law,
describes the vocabulary growth of natural texts
\citep{KuraszkiewiczLukaszewicz51en, Guiraud54, Herdan64,
  Heaps78}. Heaps' law is also studied for large language models
\citep{TaoOther24, LaiOther23}.  According to this law, the expected
number of unique types in the first $t$ tokens of the process
$(K_t)_{t\in\mathbb{Z}}$ considered in (\ref{Zipf}) increases like a
sublinear power-law function,
\begin{align}
  \label{Heaps}
  A_V^- t^{\beta_V}\le V(\mathcal{K}_1^t)&\le A_V^+ t^{\beta_V},
  & 0<\beta_V&<1, & A_V^\pm&>0.
\end{align}
Heaps' law (\ref{Heaps}) is widely viewed as a direct consequence of
Zipf's law (\ref{Zipf}) with $\beta_V=1/\alpha_Z$ but there are some
complications, which we will inspect further. A naive estimation often
yields $\alpha_Z\approx 1$ and $\beta_V\approx 0.8$ for million-token
corpora (novel-sized texts). There are also problems with languages
that use the ideographic script \citep{TanakaIshii21}.

\paragraph{Hilberg's hypothesis.}
A reinterpretation of Shannon's early findings from 1950's
\citep{Shannon48, Shannon51}, Hilberg's hypothesis, also called
Hilberg's law, was developed around 1990's, mostly in the physics of
complex systems \citep{Hilberg90, EbelingNicolis92,
  BialekNemenmanTishby01b, CrutchfieldFeldman03, Debowski06,
  Debowski11b}.  According to this hypothesis, for a stationary
process $(X_t)_{t\in\mathbb{Z}}$ that models text in natural language,
the block entropy of the first $t$ word tokens contains a sublinear
power-law component,
\begin{align}
  \label{Hilberg}
  A_H^- t^{\beta_H}\le H(X_1^t)-ht&\le A_H^+ t^{\beta_H},
  & 0<\beta_H&<1, & A_H^\pm&>0.
\end{align}
where we apply the entropy rate
\begin{align}
  \label{EntropyRateInf}
  h:=\inf_{t\in\mathbb{N}} \frac{H(X_1^t)}{t}.
\end{align}

The original rough estimate by \citet{Hilberg90} was $\beta_H=0.5$ and
was based on a meager evidence for $t\le 100$. If we estimate the
entropy by the prediction-by-partial matching (PPM) algorithm
\citep{Ryabko84en2, Ryabko88en2, ClearyWitten84}, then law
(\ref{Hilberg}) holds more universally and uniformly than Zipf's or
Heaps' laws. The empirical estimate $\beta_H\approx 0.8$ obtained for
the PPM algorithm is quite stable for billion-token corpora and does
not differ significantly across typologically diverse languages, using
either alphabetic or ideographic scripts
\citep{TakahiraOther16,TanakaIshii21}. Thus Hilberg's law seems a
plausible candidate for a statistical language universal
\citep{TanakaIshii21}.

\paragraph{Neural scaling.}
The universal power-law behavior of information measures, when applied
to natural big data, can be further confirmed by experiments with
large foundation models---language or multimodal models in particular,
developed in the beginning of 2020's \citep{DevlinOther19,
  RadfordOther19, BrownOther20, ThoppilanOther22,
  ChowderryOther22}. These advanced statistical models build upon deep
neural networks \citep{BengioLeCunHinton21}, word embeddings
\citep{MikolovOther13}, and the transformer architecture
\citep{VaswaniOther17}. Foundation models can be regarded as a
game-changer in the research of language and cognition, as they allow
to test probabilistic hypotheses about human language on an
unprecedented scale and detail \citep{FutrellMahowald25,
  FutrellHahn25}.

A particularly salient empirical finding is that the predictive
performance of these models improves with scale in a power-law
fashion. The neural scaling law characterizes how the loss function of
a foundation model---typically measured by the cross entropy rate on a
test dataset---decreases as training data $t$, model size $n$, and
compute $c$ increase \citep{HestnessOther17, KaplanOther20,
  HenighanOther2020, HernandezOther21, HoffmannOther22, PearceSong24,
  LiOther25}.  Simplifying particular empirical observations and
ignoring complex interactions among $t$, $n$, and $c$, this
power-law relationship can be approximated as
\begin{align}
  \label{NeuralT}
  A_T^- t^{-\gamma_T}\le
  h(s,t,\infty,\infty)-h
  &\le A_T^+ t^{-\gamma_T},
    & 0<\gamma_T&<1, & A_T^\pm&>0,
  \\
  \label{NeuralN}
  A_N^- n^{-\gamma_N}\le
  h(s,\infty,n,\infty)-h
  &\le A_N^+ n^{-\gamma_N},
    & 0<\gamma_N&<1, & A_N^\pm&>0,
  \\
  \label{NeuralC}
  A_C^- n^{-\gamma_C}\le
  h(s,\infty,\infty,c)-h
  &\le A_C^+ c^{-\gamma_C},
    & 0<\gamma_C&<1, & A_C^\pm&>0,
\end{align}
for a fixed $s<\infty$, where $h$ is the entropy rate and $h(s,t,n,c)$
is the cross entropy rate of a foundation model trained on $t$ tokens
and tested on $s$ tokens with the amount of parameters $n$ and the
amount of compute $c$. Laws (\ref{NeuralT})--(\ref{NeuralC}) hold for
trillion-token corpora.

\subsection{Prior connections and explanations}
\label{secPrior}

The history of theoretical explanations of power laws is as ancient as
empirical observations of these regularities \citep{LiWWW}. To explain
statistical regularities observed in natural language is a
long-standing goal of quantitative linguistics.
We envision that a similarly systematic approach to Hilberg's law and
neural scaling may succeed as well. A desired unified theory of
language and large language models should predict the functional forms
of all these laws simultaneously and predict the values of their
parameters. Let us comment briefly on the progress made so far.

\paragraph{Quantitative linguistics.}
In spite of sheer cross-disciplinary literature coverage (or maybe
because of that), there is no single explanation of Zipf's law.  The
law can be explained by diverse mechanisms ranging from monkey-typing
\citep{Mandelbrot54, Miller57}, through preferential attachment
\citep{Simon55}, also known as the Chinese restaurant process
\citep[page 92]{Aldous85}, to potential links with game theory
\citep{HarremoesTopsoe05}, semantics, and information theory
\citep{Debowski21}.

Explaining Zipf's and Heaps' laws is not so straighforward because of
multiple phenomena that break the simple picture of power laws in
natural language. These challenges include two-regime rank-frequency
plots \citep{FerrerSole01b, MontemurroZanette02, PetersenOther12,
  GerlachAltmann13, WilliamsOther15, ChacomaZanette20}, log-log
convexity of the vocabulary growth \citep{FontClosCorral15}, monotone
decreasing or $U$-shaped hapax rates \citep{Fan10, Debowski25}, and
finite-size effects \citep{LuZhangZhou10}.  One should also note that
expected values do not explain all observed phenomena since variances
are large for language data, due to Taylor's law
\citep{KobayashiTanakaIshii18, TanakaIshii21}. One also observes
burstiness of words, i.e., departure of recurrence times from the
Poisson process \citep{AltmannPierrehumbertMotter09} and long-range
correlations, i.e., slow decay of mutual information between two word
tokens separated by a given lag \citep{WieczynskiDebowski25}.

In spite of these challenges, it is fascinating that the empirical
marginal distribution of words, summarized in Zipf's and Heaps' laws,
can be quite well modeled by non-parametric and parametric urn
models---IID sources of words \citep{Baayen01, Milicka09, Milicka13,
  Davis18, Debowski25}.  For this reason, it is sound to investigate
Zipf's law and Heaps' law mathematically as if the generating process
were a memoryless source \citep{Karlin67, Khmaladze88, PitmanYor97,
  Baayen01} or more generally a stationary process---as it will be
assumed in this paper. Such a theory is capable of deriving Heaps' law
(\ref{Heaps}) from Zipf's law (\ref{Zipf}) under broad conditions.

\paragraph{Santa Fe processes.}  Since Hilberg's law (\ref{Hilberg})
does not hold for IID and finite-state sources
\citep{CrutchfieldFeldman03, Debowski21b}, this observation might be
considered a witness to large memory and complex structure of natural
texts.  This view is somewhat inaccurate, however. Large memory does
not necessitate complex structure in an intuitive sense.  A simple
stochastic source that satisfies condition (\ref{Hilberg}) is the
Santa Fe process described by \citet{Debowski05en, Debowski09,
  Debowski11b} and later rediscovered by \citet{Hutter21} in the
context of machine learning.

The idea of the Santa Fe process $(X_t)_{t\in\mathbb{Z}}$ is to
decompose each token $X_t$ as a pair of a natural number $K_t$ and an
additional bit---which is copied from a certain sequence of bits
$(Z_k)_{k\in\mathbb{N}}$ by taking the bit at position $K_t$. Thus,
each text token $X_t$ may be written as a pair
\begin{align}
  \label{SantaFe}
  X_t=(K_t,Z_{K_t}),
\end{align}
where $(Z_k)_{k\in\mathbb{N}}$ is the sequence of bits, called the
knowledge, and $(K_t)_{t\in\mathbb{Z}}$ is the sequence of natural
numbers, called the narration. To draw attention to a important
detail, the extra bit $Z_k$ has the property of making token $(k,Z_k)$
one bit more unpredictable the first time it is observed and its
effect vanishes later.

The terms ``knowledge'' and ``narration'' were chosen because of a
semantic interpretation of Santa Fe processes, discussed by
\citet{Debowski11b, Debowski21, Debowski21b}.  We may interpret that
pairs $(K_t,Z_{K_t})$ are statements that describe bits of sequence
$(Z_k)_{k\in\mathbb{N}}$ in an arbitrary order but in a
non-contradictory way. Namely, if statements $(k,Z_k)$ and
$(k',Z_{k'})$ describe the same bit $(k=k')$ then they assert the same
value $(Z_k=Z_{k'})$. Thus, we may interpret that sequence
$(Z_k)_{k\in\mathbb{N}}$ expresses some unbounded immutable general
knowledge which is only partially accessed and described in finite
texts.

Contrary to intuitions about complex structures, Hilberg's law arises
also in the highly simplified setting of Santa Fe processes. In
particular, it suffices to assume that narration
$(K_t)_{t\in\mathbb{Z}}$ is an IID source with the Zipf distribution
(\ref{Zipf}) and knowledge $(Z_k)_{k\in\mathbb{N}}$ is a sequence of
independent fair coin flips, independent of narration
$(K_t)_{t\in\mathbb{Z}}$. Under these conditions, process
$(X_t)_{t\in\mathbb{Z}}$ is exchangeable and we obtain Heaps' law
(\ref{Heaps}) and Hilberg's law (\ref{Hilberg}) with
$\beta_V=\beta_H=1/\alpha_Z$, similar results having been obtained by
\citet{Debowski11b,Debowski21}.

\paragraph{Machine learning.}
There is a rapidly growing body of literature that seeks to explain
the neural scaling law and similar quantitative observations
concerning foundation models \citep{BelkinOther19, LouartOther18,
  BartlettOther20, BahriOther21, ZhangOther21, Belkin21, Hutter21,
  MaloneyRobertsSully22, RobertsOther22, WeiOther22, SharmaKaplan22,
  MichaudOther23, AchilleSoatto26, BarkeshliAlfaranoGromov26,
  CagnettaOther26}. Many of these works involve sophisticated
mathematical frameworks, including applications of random matrix
theory and techniques from theoretical physics such as Feynman
diagrams.

From this perspective, we think that it is important to appreciate
works which derive neural scaling from facts as simple as Zipf's law
\citep{Hutter21, MaloneyRobertsSully22, MichaudOther23,
  CabannesDohmatobBietti24, NeumannGros25, PanOther25}. In the
following, we compare our results with several most related works in
this area, namely \citep{Hutter21, MichaudOther23, AchilleSoatto26,
  CagnettaOther26}.

Chronologically, in the first work, \citet{Hutter21} derived the
neural scaling for sample length (\ref{NeuralT}) for the specific
model of a Santa Fe process (\ref{SantaFe}) with the IID narration
$(K_t)_{t\in\mathbb{Z}}$ distributed according to the exact
Zipf-Mandelbrot law $P(K_t=k)\propto k^{-\alpha_Z}$ and the
fair-coin-flip knowledge $(Z_k)_{k\in\mathbb{N}}$. Staying in the IID
regime, \citet{Hutter21} was also able to develop laws not only for
expectations but also variances.

In the second work, \citet{MichaudOther23} considered the same
specific Santa Fe process as considered by
\citet{Hutter21}. Independently, they assigned it a more semantic
interpretation like in \citet{Debowski11b, Debowski21, Debowski21b},
and derived the neural scaling laws (\ref{NeuralT})--(\ref{NeuralN})
with exponents
\begin{align}
  \label{PredictedExponents}
  \gamma_T &= \frac{\alpha_Z - 1}{\alpha_Z}=1-\beta_V,
  &
    \gamma_N &= \alpha_Z-1=\frac{1-\beta_V}{\beta_V}.
\end{align}
for the Heaps law exponent $\beta=1/\alpha_Z$.

Identifying the Heaps law exponent $\beta_V$ with the Hilberg law
exponent $\beta_H$, for estimate $\beta_H\approx 0.8$ reported by
\citet{TakahiraOther16}, formulas (\ref{PredictedExponents}) yield
$\gamma_T= 0.2$ and $\gamma_N= 0.25$. This is quite a difference to
the values $\gamma_T\approx 0.095$ and $\gamma_N\approx 0.076$
reported by \citet{KaplanOther20}. We note that these estimates were
obtained for corpora of a different magnitude---billions of tokens in
the case of \citet{TakahiraOther16} to be contrasted with trillions of
tokens in the case of \citet{KaplanOther20}.

Formulas (\ref{PredictedExponents}) imply inequality
$\gamma_T<\gamma_N$, which is not confirmed by \citet{KaplanOther20}
but corroborated by later experiments \citep{HenighanOther2020,
  HernandezOther21, HoffmannOther22, PearceSong24, LiOther25}.  Figure
15 by \citet{MichaudOther23} depicts a huge variation in the empirical
estimates of parameters $\gamma_T$ and $\gamma_N$ across different
studies, depending on the cited experiment or even within the
particular experimental paper.  Thus theoretical derivations of the
neural scaling law probably cannot fully explain the diversity of
empirically obtained parameter values.

Moreover, a theoretical derivation of the neural scaling law for
compute (\ref{NeuralC}) seems to have not been achieved yet.  From
this perspective, it may be important to mention the work of
\citet{AchilleSoatto26}, who sought to link Hilberg's law with
time-bounded Kolmogorov complexity \citep{Levin73, Levin84}. This work
does not concern neural scaling (\ref{NeuralT})--(\ref{NeuralC}) but
rather intelligent agents that find themselves under pressure to
memorize patterns if they are rewarded for saving time. Still, it may
be inspiring for future derivations of the compute law
(\ref{NeuralC}).

In the fourth, more recent work, \citet{CagnettaOther26} developed a
neural scaling formula from Hilberg's law, which they have
hypothesized without any explicit reference to the earlier work in
this area. Their reasoning is different than one to be presented in
this paper. In particular, \citet{CagnettaOther26} assume additionally
a power-law decay of token-token correlation strength, which we also
found empirically in language data
\citep{WieczynskiDebowski25}. However, this condition is not necessary
in our theoretical derivation of neural scaling from Hilberg's
law. Our assumption (\ref{ConditionalBound}) is the closest to this
power-law decay of token-token correlation out of all our
formulas. This assumption, however, is only used to derive Heaps' law
from Zipf's for long-range dependent data.

\medskip%
As we can see, there is a lot of partial insight in the machine
learning literature. Besides a succinct review of references to much
earlier works in quantitative linguistics, matematics, information
theory, and complex systems, which we tried to sketch, a clear
mathematical message is due.

\section{Statement of results}
\label{secResults}

In this section, we present the core of our theoretical results. We
begin with explaining what is a novel development, then we state the
theorems formally, and we end up with discussing the relevance of
assumptions.

\subsection{Novel ideas}
\label{secNovelty}

As we have mentioned in the first sentence of this article, it is
increasingly more frequently acknowledged that the neural scaling law
observed for foundation models may be caused by the inherent power
laws of natural language.  Whereas there are some formal results in
this in vein discussed in Section \ref{secPrior}, the present paper
develops and strengthens the ideas from an earlier unpublished attempt
to derive neural scaling from Hilberg's hypothesis
\citep{Debowski23}. The present paper supplies a simple-minded
baseline that, in contrast to \citep{Debowski23}, takes into account
also the effect of limiting the amount of training compute. Particular
Theorems \ref{theoSandwichZipf}--\ref{theoHilbergNeural} to be
discussed in Section \ref{secTheorems} contribute to an effort of
understanding large-scale statistical regularities of language and
language models within a common theoretical framework.  Their novelty
lies in five points listed below.

\paragraph{A hierarchy of statistical laws.}
The principal conceptual contribution of this paper is the derivation
of the chain of implications
\begin{align}
  \label{Implications}
  \text{Zipf}
  \xRightarrow{\text{Theorems 1--3}} \text{Heaps}
  \xRightarrow{\text{Theorem 4}} \text{Hilberg}
  \xRightarrow{\text{Theorem 5}} \text{neural scaling},
\end{align}
which link four regularities that are usually investigated in
different research communities. In particular, Zipf's and Heaps' laws
originated in quantitative linguistics, Hilberg's law arose in
information-theoretic studies of long-range dependence in complex
systems, whereas neural scaling laws emerged from empirical
investigations in machine learning. Our results suggest that these
phenomena need not be viewed as independent observations but may
instead form a chain of progressively more general implications.

\paragraph{Increasing generality of assumptions.}
Indeed, the individual propositions are established under assumptions
of different strength. Theorems \ref{theoUpperHeaps} and
\ref{theoLowerHeaps} concern IID and some pseudo-mixing processes,
respectively.  Theorem \ref{theoHilbergSantaFe} is proved for
stationary Santa Fe processes, which provide a mathematically
tractable model exhibiting both Zipfian and Hilberg-type behavior.
Finally, Theorem \ref{theoHilbergNeural} is formulated in an
information-theoretic framework that can accommodate also certain
non-stationary sources. Although no single theorem covers all settings
of practical interest, the sequence of results demonstrates that more
global statistical laws can be deduced from increasingly weaker
assumptions.

\paragraph{Differential forms of Heaps' and Hilberg's laws.}
A conceptual novelty of this paper is the use of differential versions
of Heaps' and Hilberg's laws. Classical formulations describe the
growth of cumulative quantities such as vocabulary size or excess
entropy. Deriving neural scaling, however, requires controlling the
marginal gains obtained from additional data. This motivates the
introduction of differential laws that quantify incremental growth
rates. These laws constitute the natural intermediate quantities
linking linguistic statistics to learning curves.

\paragraph{Implications for neural scaling.}
Theorem \ref{theoHilbergNeural} suggests that neural scaling may be
understood as a consequence of the statistical structure of the data,
rather than solely as a property of particular optimization
algorithms, architectures, or parameterizations. The theorem does not
predict the exact scaling exponents observed in practice. Instead, it
provides information-theoretic lower bounds on the achievable rates of
improvement with increasing data, model complexity, and computational
resources. In this sense, the result may be viewed as a structural
constraint on learning curves that holds independently of many
implementation details.

\paragraph{Information-theoretic modeling of resources.}
The assumptions concerning model size (\ref{ParameterBound}) and
training compute (\ref{ComputeBound}) are stated in terms of entropy
bounds rather than specific computational architectures. This choice
reflects the idea that any resource constraint ultimately limits the
amount of information that can be stored, processed, or inferred. By
expressing the assumptions at this level of abstraction, the resulting
bounds apply simultaneously to a wide range of learning systems.

\subsection{Formal propositions}
\label{secTheorems}

This paper strives to identify mathematical implications in the most
general setting and to make interdisciplinary connections. The main
contributions of the present paper are organized as the chain of
derivations (\ref{Implications}).  The proofs of the respective
Theorems \ref{theoSandwichZipf}--\ref{theoHilbergNeural} can be found
in Appendix \ref{secImplications}, whereas the discussion of the
strength of these formal propositions is deferred to Section
\ref{secRelevance}.

Now let us announce these results formally.  First of all, Zipf's law
with the Mandelbrot correction (\ref{Zipf}) implies structural
inequalities (\ref{UpperStructure}) and (\ref{LowerStructure}) that
matter in further developments.
\begin{theorem}
  \label{theoSandwichZipf}
  Let a sequence $(p_k)_{k\in\mathbb{N}}$, where $p_k\ge 0$ and
  $\sum_{k\in\mathbb{N}} p_k=1$. Suppose that we have Zipf's law with
  the Mandelbrot correction
  \begin{align}
    \label{SandwichZipf}
    A_Z^-k^{-1/\beta}\le p_k\le A_Z^+k^{-1/\beta}
  \end{align}
  for some constants $\beta\in(0,1)$, $A_Z^\pm>0$, and all
  $k\in\mathbb{N}$. Then for
  \begin{align}
    A_S^+&:=\okra{\frac{1}{A_Z^+}}^{1-\beta},
    &
    A_S^-&:=\frac{A_Z^-}{1/\beta-1}\okra{\frac{1}{A_Z^+}}^{1-\beta},
  \end{align}
  and all $p>0$, we have two structural inequalities
  \begin{align}
    \label{UpperStructure}
    \sum_{k\in\mathbb{N}} \boole{p_k\ge p}&\le A_S^+p^{-\beta},
    \\
    \label{LowerStructure}
    \sum_{k\in\mathbb{N}} p_k\boole{p_k\le p}&\ge A_S^-p^{1-\beta}.
  \end{align}
\end{theorem}
Bounds similar to (\ref{UpperStructure}) and (\ref{LowerStructure})
were discussed by \citet{Karlin67}.

Having the first structural bound for sequences of independent
identically distributed random variables, we derive an upper bound for
the differential Heaps law.
\begin{theorem}
  \label{theoUpperHeaps}
  For an IID process $(K_t)_{t\in\mathbb{Z}}$ over natural numbers,
  let $p_k:=P(K_0=k)$. Suppose that the structural inequality
  (\ref{UpperStructure}) holds for some constants $\beta\in(0,1)$,
  $A_S^+>0$, and all $p>0$. Then for
  \begin{align}
    A_V^+=\Gamma(1-\beta)A_S^+,
  \end{align}
  we have an upper bound for the differential Heaps law
  \begin{align}
    \label{UpperHeaps}
    \frac{V(\mathcal{K}_{t+1}^{t+s}\setminus\mathcal{K}_{1}^{t})}{s}
    &\le \frac{A_V^+}{t^{1-\beta}}.
  \end{align}
\end{theorem}

By contrast, if the process is stationary with the second structural
bound and appropriately bounded conditional probabilities, then a
lower bound for the differential Heaps law is satisfied.
\begin{theorem}
  \label{theoLowerHeaps}
  For a stationary process $(K_t)_{t\in\mathbb{Z}}$ over natural
  numbers, let $p_k:=P(K_0=k)$ and $p_k(t):=P(K_t=k|K_0=k)$. Suppose
  that the structural inequality (\ref{LowerStructure}) and 
  condition
  \begin{align}
    \label{ConditionalBound}
    p_k(t)&\le A_M p_k^\delta
  \end{align}
  hold for some constants $\beta\in(0,1)$, $\delta\in(0,1]$,
  $A_S^-,A_M>0$, all $k,t\in\mathbb{N}$, and all $p>0$. Then for
  \begin{align}
    A_V^-:=\frac{A_S^-}{2}\okra{\frac{1}{4A_M}}^{\frac{1-\beta}{\delta}},
  \end{align}
  we have a lower bound for the differential Heaps law
  \begin{align}
    \label{LowerHeaps}
    \frac{V(\mathcal{K}_{t+1}^{t+s}\setminus\mathcal{K}_{1}^{t})}{s}
    &\ge \frac{A_V^-}{(t+s)^{\frac{1-\beta}{\delta}}}.
  \end{align}
\end{theorem}
For IID processes, condition (\ref{ConditionalBound}) holds with
$A_M=1$ and $\delta=1$. For finite-state processes, it holds with
$A_M<\infty$ and also $\delta=1$, as it will be discussed further.

The lower bound for the differential Heaps law implies the analogous
bound for the differential Hilberg law in case of some stationary
Santa Fe processes.
\begin{theorem}
  \label{theoHilbergSantaFe}
  Consider a process $(X_t)_{t\in\mathbb{Z}}$ and a one-to-one mapping
  $\phi$ such that there holds a Santa Fe decomposition
  \begin{align}
    \label{SantaFeDisguised}
    \phi(X_t)=(K_t,Z_{K_t}),
  \end{align}
  where narration $(K_t)_{t\in\mathbb{Z}}$ is a stationary ergodic
  process over natural numbers and knowledge $(Z_k)_{k\in\mathbb{N}}$
  is a sequence of random variables independent of narration
  $(K_t)_{t\in\mathbb{Z}}$ with conditional entropies
  \begin{align}
    \label{KnowledgeEntropySantaFe}
    H(Z_k|Z_1^{k-1})\ge A_K.
  \end{align}
  for a constant $A_K>0$. If the marginal entropy satisfies
  \begin{align}
  \label{EntropyCond}
    H(X_t)<\infty
  \end{align}
  then we have inequality
  \begin{align}
    \label{HilbergConditionalSantaFe}
    \frac{H(X_{k+1}^{k+s}|X_{1}^{t})}{s}-h
    &\ge
      \frac{A_KV(\mathcal{K}_{k+1}^{k+s}\setminus\mathcal{K}_{1}^{t})}{s},
  \end{align}
  where $h$ is the entropy rate defined as (\ref{EntropyRateInf}).
\end{theorem}
Inequality (\ref{EntropyCond}) holds for all stationary processes over
a finite alphabet.

Finally, using the lower bound for the differential Hilberg law, we
derive a special lower bound for the neural scaling law.
\begin{theorem}
  \label{theoHilbergNeural}
  Let $(X_t)_{t\in\mathbb{Z}}$ be an arbitrary process such that we
  have a lower bound for the differential Hilberg law
  \begin{align}
  \label{HilbergConditional}
    \sup_{k\ge t}
    \frac{H(X_{k+1}^{k+s}|X_{1}^{t})}{s}-h
  &\ge \frac{A_H^-}{(t+s)^{1-\beta}}
  \end{align}
  for some constants $h\in(0,\infty)$, $\beta\in(0,1)$, and $A_H^->0$.
  Let $Q$ be a random probability measure that satisfies bounds
  \begin{align}
    \label{ParameterBound}
    H(Q)&\le A_H^-n,
    \\
    \label{ComputeBound}
    H(Q|X_1^t)&\le A_H^-c.
  \end{align}
  Define also the maximal test sample length
  \begin{align}
    \label{TestBound}
    s_{\max}(t,n,c)
    &:=\min\klam{
      \frac{t\sqrt{\frac{ct^{-\beta}}{1-\beta}}}{1-\sqrt{\frac{ct^{-\beta}}{1-\beta}}}
      ,
      \okra{\frac{n}{1-\beta}}^{\frac{1}{\beta}}
      }.
  \end{align}  
  Then for test sample lengths $s\le s_{\max}(t,n,c)$ we have the
  lower bound for the neural scaling
  \begin{align}
    &\sup_{k\ge t}
      \frac{\mean(-\log Q(X_{k+1}^{k+s}))}{s}-h
      \nonumber\\
    &\ge
      A_H^-
      \max\klam{
      \frac{1}{t^{1-\beta}}
      \okra{\frac{1-\sqrt{\frac{ct^{-\beta}}{1-\beta}}}{1+\sqrt{\frac{ct^{-\beta}}{1-\beta}}}}^{1-\beta}
      -\frac{c}{t}\okra{\frac{1-\sqrt{\frac{ct^{-\beta}}{1-\beta}}}{2\sqrt{\frac{ct^{-\beta}}{1-\beta}}}}
      ,
      \frac{2^{\beta}-1+\beta}{2}
      \okra{\frac{1-\beta}{n}}^{\frac{1-\beta}{\beta}}
      }
      .
      \label{NeuralConditional}
  \end{align}
\end{theorem}

\subsection{Relevance of the assumptions}
\label{secRelevance}

Overall, our results support a view in which neural scaling laws are
not isolated empirical phenomena but the final link in a broader chain
of statistical regularities that connect token frequencies, vocabulary
growth, long-range predictability, and learnability.  Let us discuss
the relevance of assumptions of Theorems
\ref{theoSandwichZipf}--\ref{theoHilbergNeural}.

\paragraph{Long-range dependence.}
Theorems \ref{theoSandwichZipf}--\ref{theoLowerHeaps} deal with a
subclass of stationary processes.  We notice that for $\delta=1$,
condition (\ref{ConditionalBound}) reduces to a uniform upper bound on
the pointwise mutual information between any two tokens. That
condition is satisfied for IID and finite-state processes
\citep[Theorem 11.3]{Debowski21}.  It is implied by condition
$\psi^*(1)<\infty$ in the terminology of \citet{Bradley05}. We also
note that any stationary strongly mixing process
$(K_t)_{t\in\mathbb{Z}}$ satisfies identity
\begin{align}
  \lim_{t\to\infty} p_k(t)=p_k.
\end{align}
By contrast, the proof of Theorem \ref{theoLowerHeaps} allows for
taking also $\delta<1$, which may be useful for long-range dependent
processes. In particular, natural language may require $\delta<1$
because of Taylor's law \citep{KobayashiTanakaIshii18, TanakaIshii21}
and burstiness of words \citep{AltmannPierrehumbertMotter09}.

\paragraph{Validity of the Santa Fe decomposition.}
Theorem \ref{theoHilbergSantaFe} deals with another subclass of
stationary processes and is a proxy for more general results. Its
assumptions are motivated by the intuition that language and other
information-bearing signals communicate discrete units of
knowledge. In the Santa Fe decomposition, the random variables
$(Z_k)_{k\in\mathbb{N}}$ represent such units, whereas the process
$(K_t)_{t\in\mathbb{Z}}$ determines which unit is being referred to at
a given position. The assumption (\ref{KnowledgeEntropySantaFe})
states that each unit contributes a non-negligible amount of novel
information and therefore cannot be reconstructed from previous units.

Although this representation is highly idealized, it captures a
natural distinction between references and facts. For example, words,
phrases, or textual passages often refer repeatedly to the same
underlying entities, events, or concepts. In this interpretation,
narration $(K_t)_{t\in\mathbb{Z}}$ determines the sequence of
references, whereas knowledge $(Z_k)_{k\in\mathbb{N}}$ describes the
latent facts being referenced. The growth of predictive information
then reflects the gradual discovery of previously unknown facts rather
than merely the repetition of surface forms.

The Santa Fe decomposition should therefore be viewed not as a literal
generative model of language but as a mathematical abstraction of
semantic structure. Establishing analogous results under less
restrictive assumptions is an important open problem.  One would like
to replace the explicit decomposition (\ref{SantaFeDisguised}) by more
intrinsic information-theoretic conditions that characterize the
existence of persistent latent variables or reusable knowledge
fragments. Such a development can extend the connection between Heaps'
and Hilberg's laws far beyond the specific class of Santa Fe
processes.

In view of so called theorems about facts and words \citep[Section
8.4]{Debowski21}, we suppose that one can replace decomposition
(\ref{SantaFeDisguised}) in Theorem \ref{theoHilbergSantaFe} by a more
general assumption such as strong non-ergodicity. According to
\citep[Definition 5.16]{Debowski21}, a stationary process is called
strongly non-ergodic if its shift-invariant $\sigma$-field is
non-atomic. Equivalently, it means that there is a continuous real
random variable that can be estimated arbitrarily well from any
sufficiently long sample \citep[Theorem 8.12]{Debowski21}. This is a
natural generalization since Santa Fe processes are strongly
non-ergodic. We also note that non-ergodicity is not a necessary
condition to obtain Hilberg's law. There are simple modifications of
Santa Fe processes and other constructive toy examples that allow for
Hilberg's law in mixing or non-mixing ergodic settings, see
\citep{Debowski12, Debowski20, Debowski21b, Debowski23}.  We need more
examples and results in this vein, also for numerical experiments with
artificial data, but their detailed discussion exceeds a reasonable
page limit of this article.

\paragraph{Hallucinated worlds vs.\ arbitrary words.}
As we can see, after discerning a general implication from Hilberg's
law to neural scaling, the burden of explanations shifts to stating
why Hilberg's hypothesis may be sound. In some settings, Hilberg's
hypothesis can be viewed as Zipf's law for tokens that are internally
random enough like in the Santa Fe processes (\ref{SantaFe}) or their
lossless encoding (\ref{SantaFeDisguised}).  The concrete
interpretation of these tokens can be varied. In Section
\ref{secPowerLaws}, we suggested a semantic interpretation, where
random variables (\ref{SantaFe}) are logically consistent propositions
that describe a certain random state of world. This world can be
``coherently hallucinated'', so as to speak in modern terms. Such
interpretation dates back to \citet{Debowski11b} and was defended
independently by \citet{MichaudOther23}.

Yet, there is a more simple-minded interpretation.  Decomposition
(\ref{SantaFeDisguised}), somewhat more abstract than (\ref{SantaFe}),
suggests that Hilberg's law is equivalent to Zipf's law for
orthographic words if those have sufficiently arbitrary
shapes. Arbitrariness of word shapes is one of the classical tenets of
linguistics \citep{DingemanseOther15}. However, proving that Hilberg's
law is tantamount to Zipf's law for empirically detectable and partly
random word-like structures requires an inequality opposite to
(\ref{HilbergConditionalSantaFe}). Turning
(\ref{HilbergConditionalSantaFe}) into a sandwich bound necessitates a
longer excursion to universal coding and involves results that exceed
our present statement of so called theorems about facts and words
\citep[Section 8.4]{Debowski21}, see also \citep{Debowski11b,
  Debowski21b}.  For this reason, we postpone this theoretical
development to another article. We want to stress, however, that the
explanation of Hilberg's hypothesis by arbitrariness of word shapes
does not exclude a semantic interpretation of this law if the
information content of the vocabulary and the information content of
other more abstract knowledge are of a similar magnitude.  A
one-to-one correspondence between words and concepts may be the
desired optimum.

\paragraph{Bound on the parameter count.}
The entropy of a discrete object cannot be essentially larger than the
description length of this object. Thus limiting the amount of a
certain resource implies constraining the respective entropy. In view
of this, bound (\ref{ParameterBound}) on the entropy of parameters
seems uncontroversial. However, the reality may be more complicated.
Observation $\gamma_N>\gamma_T$ for the simple parameter count
reported by \citet{KaplanOther20} seems to contradict Theorem
\ref{theoHilbergNeural} in contrast to inequality $\gamma_N<\gamma_T$
reported by \citet{HoffmannOther22}.  We note that these inconsistent
observations might be potentially explained out because the number of
trainable parameters is an imperfect proxy for the amount of
information represented by a model.  A real-valued parameter can in
principle encode an arbitrarily large number of bits and different
foundation models may use a varying amount of digits of the binary
expansion per real-valued parameter \citep{LiuLiuGore25}. Thus a naive
parameter count may overestimate or underestimate the effective
capacity of a real-valued neural network and one should rather use
universally comparable measures of model complexity.  Determining the
true number of degrees of freedom of a foundation model may not be so
straightforward.

\paragraph{Bound on compute.}
Bound (\ref{ComputeBound}) is also formulated in a purely
information-theoretic sense but its justification is more fragile.  We
model the time complexity of computing model $Q$ from training data
$X_1^t$ by conditional entropy, whereas using time-bounded Kolmogorov
complexity \citep{Levin73, Levin84} should be more proper for this
goal.  Inequality (\ref{ComputeBound}) may make sense if $Q$ is a
stochastic function of $X_1^t$, which is the less predictable from
$X_1^t$ the more compute is applied. Some ideas of
\citet{AchilleSoatto26} may be useful to rectify this issue, at the
expense of significantly complicating the proof of the time-bounded
analogue of Theorem \ref{theoHilbergNeural}. To keep this paper
focused on purely Shannon information measures, we relegate this
problem to future research.

\paragraph{The worst-case expected loss.}
Theorem \ref{theoHilbergNeural} suggests that the natural theoretical
definition of the model cross entropy for the neural scaling laws
(\ref{NeuralT})--(\ref{NeuralC}) in long-range dependent processes is
\begin{align}
  \label{CrossSup}
  h(s,t,n,c):=\sup_{k\ge t}\frac{\mean(-\log Q(X_{k+1}^{k+s}))}{s}.
\end{align}
where the foundation model $Q$ has been tested on data $X_{k+1}^{k+s}$
and trained on data $X_1^t$ with the amount of parameters $n$ and the
amount of compute $c$.  In this setting, $h(s,t,n,c)$ becomes a sort
of the worst-case expected cross entropy rate. Actually, setting
(\ref{CrossSup}) may model desirable testing conditions theoretically
better than using
\begin{align}
  \label{CrossNext}
  h(s,t,n,c):=\frac{\mean(-\log Q(X_{t+1}^{t+s}))}{s},
\end{align}
since one usually wants the test data to be maximally different from
the training data while still coming from the same generative source.
We notice that the difference between (\ref{CrossSup}) and
(\ref{CrossNext}) does not arise for memoryless sources but it can be
quite important for general long-range dependent stationary or
non-stationary data.

\paragraph{Inequalities and interactions.}
Because a more appropriate information-theoretic bound on compute is
needed, determining a more precise relationship between the Hilberg
exponent and the neural scaling exponents remains an important
empirical and theoretical problem. Still, let us make the following
remark.  Suppose that conditions (\ref{NeuralT})--(\ref{NeuralC}) hold
with a finite compute bound $c\ll t^\beta$ rather than $c=\infty$. If
additionally the neural scaling law bound (\ref{NeuralConditional}) is
true then we obtain asymptotic inequalities
\begin{align}
\gamma_T &\le 1-\beta,
&
\gamma_N &\le \frac{1-\beta}{\beta},
\end{align}
which generalize equalities (\ref{PredictedExponents}) known for
exchangeable Santa Fe processes (those with an IID narration). Besides
these inequalities, which hold in the general setting of arbitrary
processes, Theorem \ref{theoHilbergNeural} predicts quite complicated
interactions between parameters $t$, $n$, and $c$ and the cross
entropy of the foundation model. Having such a theoretical baseline
may be more informative for the research of neural scaling laws than
fitting theory-free regressions to empirical data.

\medskip%
This remark ends our discussion. After going through the paper
conclusion, the reader is invited to consult Appendices
\ref{secBitsTypes} and \ref{secImplications}. The appendices contain
the main mathematical derivations as well as some general results of
quantitative linguistic importance.

\section{Conclusion}
\label{secConclusion}

Extending prior results in machine learning such as \citep{Hutter21,
  MichaudOther23} and earlier results in information theory and
quantitative linguistics \citep{Debowski06, Debowski11b, Debowski21,
  Debowski23}, we have supplied and inspected a deductive chain that
connects Zipf's law to neural scaling. By isolating the discrete steps
that go through Heaps' law and Hilberg's hypothesis and by giving
explicit assumptions needed for each step, we have attempted to
clarify which aspects of natural language are responsible for the
power-law behavior observed in foundation models.  Our results show
that once vocabulary growth exhibits a power-law growth and once block
entropy inherits this scaling, then the constraints imposed by limited
data, parameters, and compute produce the neural scaling law.

The direct reduction of the neural scaling law to Hilberg's hypothesis
shifts attention to the deeper question of why power-law scaling of
entropy arises in natural language at all.  Broadly speaking, the
results of this paper suggest that statistical laws of language and
scaling laws of learning should not be studied in isolation. A
comprehensive theory would explain not only individual laws but also
the relationships among them, the numerical values of their exponents,
and their variation.  We hope that the framework developed here
provides a baseline for such investigation and that future work will
refine the idealizations which we have identified.

\section*{Acknowledgments and Disclosure of Funding}

Several paragraphs in the Introduction and in the Conclusion were
drafted in a dialog between the author and ChatGPT
(\url{https://chatgpt.com/}) and critically post-edited by the
author. Subsequently, the article was reviewed by the Stanford Agentic
Reviewer (\url{https://paperreview.ai/}) and its suggestions were
partially followed by the author.

This work received no external funding.
  
\appendix
  
\section{Remarks on counting bits and types}
\label{secBitsTypes}

This appendix is a gentle mathematical introduction that precedes the
proper proofs of Theorems
\ref{theoSandwichZipf}--\ref{theoHilbergNeural}.  It is split into
four parts. %
Part~\ref{secInequalities} treats general analogies for Shannon
entropy and expected cardinality. %
Part~\ref{secArbitrary} discusses the rates of entropy and expected
cardinality for arbitrary (non-stationary) processes. %
Part~\ref{secStationary} handles the block entropy and the expected
block cardinality for stationary processes. It also introduces
spectrum elements and bounds the rate of hapaxes. %
Part~\ref{secIID} analyzes the spectrum elements for IID processes. %

\subsection{Fundamental inequalities}
\label{secInequalities}

We will be developing parallel results for the Shannon entropy and the
expected number of types. Our reasoning is based on algebraically
simple but systematically refined information-theoretic
considerations. The general spirit of the $I$-measure will be
accompanying them \citep{Hu62}. For the textbook treatment and the
background, we refer to \citep{CoverThomas06,Yeung02}.

In general, both Shannon entropy and expected cardinality are
subadditive and enjoy the triangle inequality. The formulas for
Shannon entropy are well known.
\begin{proposition}[subadditivity]
  For random variables $X,Y,Q$, we have
\begin{align}
  \label{Subadditive}
  H(X,Y|Q)&\le H(X|Q)+H(Y|Q).
\end{align}
\end{proposition}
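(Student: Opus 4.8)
The plan is to reduce the claimed subadditivity to the nonnegativity of conditional mutual information, which in turn follows from Gibbs' inequality. First I would invoke the chain rule for entropy, which is immediate from the definitions $H(X|Q):=H(X,Q)-H(Q)$ given in the Notations: one has $H(X,Y|Q)=H(X|Q)+H(Y|X,Q)$, since both sides equal $H(X,Y,Q)-H(Q)$ after telescoping (note $H(Y|X,Q)=H(X,Y,Q)-H(X,Q)$). With this identity in hand, inequality (\ref{Subadditive}) is equivalent to the assertion that conditioning does not increase entropy, namely $H(Y|X,Q)\le H(Y|Q)$, which is exactly $I(X;Y|Q)\ge 0$ for the conditional mutual information defined in the Notations.

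The core step is therefore to establish $I(X;Y|Q)\ge 0$. I would write this quantity explicitly as a $Q$-averaged pointwise relative entropy: conditionally on each value of $Q$, the quantity $I(X;Y|Q)$ equals the expected Kullback--Leibler divergence between the joint conditional law of $(X,Y)$ and the product of its two conditional marginals. Gibbs' inequality---equivalently, Jensen's inequality applied to the convex map $-\log$---shows that each such divergence is nonnegative, so the average over $Q$ is nonnegative as well. This yields $I(X;Y|Q)\ge 0$ and closes the reduction, giving (\ref{Subadditive}).

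The only point that requires genuine care---and the place I expect the sole friction---is that the alphabet is countable rather than finite, so individual entropies may be infinite and the subtractions defining the conditional quantities need not be well posed. I would dispatch this by the standard dichotomy: if either $H(X|Q)$ or $H(Y|Q)$ is infinite, the right-hand side of (\ref{Subadditive}) is $+\infty$ and nothing is to be shown, so I may assume both are finite; finiteness then propagates through the chain-rule identity and legitimizes the rearrangements above. To make the divergence bound fully rigorous over a countably infinite support I would, if needed, prove it first for finitely supported truncations and pass to the limit by monotone convergence, or simply appeal to the convergence guarantees for the $\sigma$-field formulation of Shannon measures in \citep[Sections 5.3 and 5.4]{Debowski21}.
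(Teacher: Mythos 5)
Your proof is correct and follows essentially the same route as the paper's: apply the chain rule $H(X,Y|Q)=H(X|Q)+H(Y|X,Q)$ and then use $H(Y|X,Q)\le H(Y|Q)$. The extra material on nonnegativity of conditional mutual information via Gibbs' inequality and on the countable-alphabet edge cases is sound but merely fills in steps the paper takes as standard.
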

\begin{proof}
  The claim follows by the chain rule
  \begin{align}
    H(X,Y|Q)&= H(X|Q)+H(Y|Q,X)
  \end{align}
  and inequality $H(Y|Q,X)\le H(Y|Q)$.
\end{proof}
\begin{proposition}[triangle inequality]
  For random variables $X,Y,Q$, we have
\begin{align}
  \label{Triangle}
  H(X|Y)&\le H(X|Q)+H(Q|Y).
\end{align}  
\end{proposition}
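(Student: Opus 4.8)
The plan is to chain together two elementary facts about Shannon entropy: that augmenting a conditional entropy with an extra variable cannot decrease it, and that conditioning on additional information cannot increase it. Both are one-line consequences of the chain rule together with the nonnegativity of (conditional) entropy and conditional mutual information, exactly in the spirit of the subadditivity proofs given just above. This keeps the argument self-contained and parallel to the entropy machinery already set up.

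First I would introduce the auxiliary variable $Q$ on the left-hand side by the monotonicity estimate
\begin{align}
  H(X|Y)&\le H(X,Q|Y),
\end{align}
which holds because $H(X,Q|Y)=H(X|Y)+H(Q|X,Y)$ and the residual term $H(Q|X,Y)$ is nonnegative. Next I would split the right-hand side by the chain rule for conditional entropy,
\begin{align}
  H(X,Q|Y)&=H(Q|Y)+H(X|Q,Y),
\end{align}
thereby isolating the term $H(Q|Y)$ that already appears in the target bound.

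Finally I would discard the dependence on $Y$ in the remaining term using the principle that conditioning reduces entropy,
\begin{align}
  H(X|Q,Y)&\le H(X|Q),
\end{align}
which is just the statement $I(X;Y|Q)\ge 0$ in the notation of the preliminaries. Combining the three displays gives $H(X|Y)\le H(Q|Y)+H(X|Q)$, which is the asserted triangle inequality.

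The main obstacle here is essentially absent: every step reduces to the chain rule and nonnegativity of a Shannon information measure, so no genuine difficulty arises. The only point deserving a moment's care is that the paper works with a countable alphabet, so the entropies may in principle be infinite; the additions above should therefore be read in the extended reals, and the inequality is vacuous whenever the right-hand side diverges, hence substantive precisely when $H(X|Q)$ and $H(Q|Y)$ are finite.
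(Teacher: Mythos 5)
Your proposal is correct and follows exactly the same route as the paper's proof: the chain rule $H(X,Q|Y)=H(Q|Y)+H(X|Y,Q)$ combined with the two monotonicity bounds $H(X|Y)\le H(X,Q|Y)$ and $H(X|Y,Q)\le H(X|Q)$. The remark about reading the inequality in the extended reals for countable alphabets is a sensible addition but does not change the argument.
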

\begin{proof}
  The claim follows by the chain rule
  \begin{align}
    H(X,Q|Y)&= H(X|Y,Q)+H(Q|Y)
  \end{align}
  and inequalities $H(X|Y)\le H(X,Q|Y)$ and $H(X|Y,Q)\le H(X|Q)$.
\end{proof}

The above familiar formulas for Shannon entropy have their
counterparts for expected cardinality if we replace pairing of
variables with the union of sets and conditioning with the set
difference.
\begin{proposition}[subadditivity]
  For random sets $\mathcal{X},\mathcal{Y},\mathcal{Q}$, we have
\begin{align}
  \label{SubadditiveV}
  V(\mathcal{X}\cup \mathcal{Y}\setminus \mathcal{Q})
  &\le V(\mathcal{X}\setminus \mathcal{Q})+V(\mathcal{Y}\setminus \mathcal{Q}).
\end{align}
\end{proposition}
\begin{proof}
  The claim follows by the chain rule
  \begin{align}
    V(\mathcal{X}\cup \mathcal{Y}\setminus \mathcal{Q})
    &= V(\mathcal{X}\setminus \mathcal{Q})+V(\mathcal{Y}\setminus \mathcal{Q}\cup \mathcal{X})
  \end{align}
  and inequality $V(\mathcal{Y}\setminus \mathcal{Q}\cup \mathcal{X})\le V(\mathcal{Y}\setminus \mathcal{Q})$.
\end{proof}
\begin{proposition}[triangle inequality]
  For random sets $\mathcal{X},\mathcal{Y},\mathcal{Q}$, we have
\begin{align}
  \label{TriangleV}
  V(\mathcal{X}\setminus \mathcal{Y})
  &\le V(\mathcal{X}\setminus \mathcal{Q})+V(\mathcal{Q}\setminus \mathcal{Y}).
\end{align}  
\end{proposition}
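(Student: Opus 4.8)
The plan is to mirror the proof of the entropy triangle inequality (\ref{Triangle}), exploiting the standing analogy in which set union plays the role of joining random variables and set subtraction plays the role of conditioning. The whole argument reduces to one additive decomposition (a cardinality chain rule) followed by two monotonicity steps, each being the set-theoretic counterpart of an entropy inequality used just above.

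First I would record the chain rule $V(\mathcal{X}\cup\mathcal{Q}\setminus\mathcal{Y})=V(\mathcal{X}\setminus\mathcal{Y}\cup\mathcal{Q})+V(\mathcal{Q}\setminus\mathcal{Y})$, which is the exact analogue of $H(X,Q|Y)=H(X|Y,Q)+H(Q|Y)$. Pointwise this is the disjoint partition of $(\mathcal{X}\cup\mathcal{Q})\setminus\mathcal{Y}$ according to membership in $\mathcal{Q}$: the part lying in $\mathcal{Q}$ is exactly $\mathcal{Q}\setminus\mathcal{Y}$, while the part outside $\mathcal{Q}$ is exactly $\mathcal{X}\setminus(\mathcal{Y}\cup\mathcal{Q})$. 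Since $\card$ is additive over a disjoint union, the identity holds for $\card$ pointwise, and taking expectations $V=\mean\card$ promotes it to $V$.

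Next I would invoke two monotonicity facts, each following from the trivial principle $\mathcal{A}\subseteq\mathcal{B}\Rightarrow V(\mathcal{A})\le V(\mathcal{B})$ (immediate because $\card$ is monotone under inclusion and $\mean$ is monotone). From $\mathcal{X}\setminus\mathcal{Y}\subseteq(\mathcal{X}\cup\mathcal{Q})\setminus\mathcal{Y}$ I get $V(\mathcal{X}\setminus\mathcal{Y})\le V(\mathcal{X}\cup\mathcal{Q}\setminus\mathcal{Y})$, the counterpart of $H(X|Y)\le H(X,Q|Y)$; and from $\mathcal{X}\setminus(\mathcal{Y}\cup\mathcal{Q})\subseteq\mathcal{X}\setminus\mathcal{Q}$ I get $V(\mathcal{X}\setminus\mathcal{Y}\cup\mathcal{Q})\le V(\mathcal{X}\setminus\mathcal{Q})$, the counterpart of $H(X|Y,Q)\le H(X|Q)$. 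Chaining the three relations yields $V(\mathcal{X}\setminus\mathcal{Y})\le V(\mathcal{X}\cup\mathcal{Q}\setminus\mathcal{Y})=V(\mathcal{X}\setminus\mathcal{Y}\cup\mathcal{Q})+V(\mathcal{Q}\setminus\mathcal{Y})\le V(\mathcal{X}\setminus\mathcal{Q})+V(\mathcal{Q}\setminus\mathcal{Y})$, which is (\ref{TriangleV}).

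There is no serious obstacle here; the only point demanding a little care is getting the bookkeeping of the chain-rule partition right under the paper's convention that $\cup$ binds tighter than $\setminus$, so that $\mathcal{X}\setminus\mathcal{Y}\cup\mathcal{Q}$ reads as $\mathcal{X}\setminus(\mathcal{Y}\cup\mathcal{Q})$. Equivalently, one could bypass the chain rule entirely and argue directly from the single inclusion $\mathcal{X}\setminus\mathcal{Y}\subseteq(\mathcal{X}\setminus\mathcal{Q})\cup(\mathcal{Q}\setminus\mathcal{Y})$---verified by the dichotomy of whether an element of $\mathcal{X}\setminus\mathcal{Y}$ lies in $\mathcal{Q}$---and then apply subadditivity of $\card$ over a union together with monotonicity of $\mean$; but I would prefer the chain-rule formulation to keep the proof parallel with its entropy twin.
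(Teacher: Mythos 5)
Your proposal is correct and follows essentially the same route as the paper: the same chain rule $V(\mathcal{X}\cup\mathcal{Q}\setminus\mathcal{Y})=V(\mathcal{X}\setminus\mathcal{Y}\cup\mathcal{Q})+V(\mathcal{Q}\setminus\mathcal{Y})$ combined with the same two monotonicity inequalities. You merely make explicit the pointwise justifications that the paper leaves implicit.
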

\begin{proof}
  The claim follows by the chain rule
  \begin{align}
    V(\mathcal{X}\cup \mathcal{Q}\setminus \mathcal{Y})
    &= V(\mathcal{X}\setminus \mathcal{Y}\cup \mathcal{Q})+V(\mathcal{Q}\setminus \mathcal{Y})
  \end{align}
  and inequalities
  $V(\mathcal{X}\setminus \mathcal{Y})\le V(\mathcal{X}\cup
  \mathcal{Q}\setminus \mathcal{Y})$ and
  $V(\mathcal{X}\setminus \mathcal{Y}\cup \mathcal{Q})\le
  V(\mathcal{X}\setminus \mathcal{Q})$.
\end{proof}

There is also an important bridging inequality for cross entropy of
random measures.
\begin{proposition}[source coding]
  For a random probability measure $Q$ applied to another
  random variable $X$, we have
  \begin{align}
  \label{SourceCoding}
  \mean(-\log Q(X))&\ge H(X|Q).
  \end{align}
\end{proposition}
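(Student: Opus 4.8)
The plan is to reduce the inequality to the non-negativity of relative entropy, applied conditionally on the random measure $Q$. Throughout I read $H(X|Q)$ through the regular conditional distribution of the discrete token $X$ given the $\sigma$-field generated by $Q$, in the spirit of the generalization of Shannon measures to arbitrary $\sigma$-fields cited above. Writing $P(X|Q)$ for this conditional probability mass, the definition $H(X|Q)=H(X,Q)-H(Q)$ unfolds to $H(X|Q)=\mean(-\log P(X|Q))$, and subtracting the two sides of the claim gives
\begin{align*}
  \mean(-\log Q(X))-H(X|Q)
  &=\mean\okra{\log\frac{P(X|Q)}{Q(X)}},
\end{align*}
so it suffices to show that this expectation is non-negative.

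First I would condition on $Q$. For a fixed realization $Q=q$, the conditional expectation of the integrand is the relative entropy between the true conditional law of $X$ and the predictive measure $q$,
\begin{align*}
  \sum_x P(x|q)\log\frac{P(x|q)}{q(x)}
  &=D\okra{P(\cdot|q)\,\|\,q}\ge 0,
\end{align*}
which is non-negative by Gibbs' inequality for every $q$; taking the outer expectation over $Q$ preserves the sign. Equivalently, and perhaps more transparently, I would invoke Jensen's inequality for the concave logarithm: since the conditional expectation of $Q(X)/P(X|Q)$ given $Q$ equals $\sum_{x:P(x|Q)>0}Q(x)\le 1$ almost surely, we get
\begin{align*}
  \mean\kwad{\log\frac{Q(X)}{P(X|Q)}}
  &\le\log\mean\kwad{\frac{Q(X)}{P(X|Q)}}\le\log 1=0,
\end{align*}
which rearranges to the claim.

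The main obstacle is not the algebra but the measure-theoretic bookkeeping around the random measure $Q$. One must fix a version of the regular conditional distribution $P(\cdot|Q)$ and verify that the relevant ratios are almost surely well defined: the event $\klam{P(X|Q)=0}$ has probability $\mean\boole{P(X|Q)=0}=\mean\sum_x P(x|Q)\boole{P(x|Q)=0}=0$, so $P(X|Q)>0$ holds almost surely; tokens $x$ with $P(x|q)=0$ contribute nothing to the sums and may be dropped; and the degenerate event $\klam{Q(X)=0}$, on which $-\log Q(X)=+\infty$, only helps the inequality and can be excised at the outset. Once these support and integrability points are settled, the statement is exactly conditional Gibbs.
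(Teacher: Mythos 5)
Your proposal is correct and follows essentially the same route as the paper's proof: condition on $Q$, split the cross entropy into the conditional entropy $\mean(-\log P(X|Q))$ plus an expected Kullback--Leibler divergence, and conclude by non-negativity of the latter. The additional Jensen-inequality derivation and the support/integrability bookkeeping are fine but not needed beyond what the paper already invokes.
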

\begin{proof}
  We have
  \begin{align}
    \mean\okra{-\log Q(X)}
    &=\mean\mean\okra{-\log Q(X)\middle|Q}
      \nonumber\\
    &=\mean\mean\okra{-\log P(X|Q)\middle|Q}
      +\mean\mean\okra{\log \frac{P(X|Q)}{Q(X)}\middle|Q}
      \nonumber\\
    &\ge \mean\mean\okra{-\log P(X|Q)\middle|Q}
      \nonumber\\
    &=\mean\okra{-\log P(X|Q)}
      =H(X|Q),
  \end{align}
  where we used the law of total expectation and the fact that the
  Kullback-Leibler divergence is non-negative.
\end{proof}

\subsection{Arbitrary processes}
\label{secArbitrary}

In the following, we consider arbitrary processes (possibly
non-stationary) over a countable alphabet.  Let us inspect the rates
of the Shannon entropy and the expected cardinality for these
objects. We may define the rates as follows.
\begin{definition}
  Let $(X_t)_{t\in\mathbb{N}}$ be an arbitrary process.  We
  define the entropy rate
  \begin{align}
    \label{SupEntropy}
    h
    &:=
      \inf_{s\in\mathbb{N}}\sup_{k\ge 0} \frac{H(X_{k+1}^{k+s})}{s}.
  \end{align}  
\end{definition}
\begin{definition}
  Let $(K_t)_{t\in\mathbb{N}}$ be an arbitrary process.  We define the
  expected cardinality rate
  \begin{align}
    \label{SupTypes}
    v
    &:=
      \inf_{s\in\mathbb{N}} \sup_{k\ge 0} \frac{V(\mathcal{K}_{k+1}^{k+s})}{s}.
  \end{align}  
\end{definition}

The Fekete lemma \citep{Fekete23} states that any sequence
$(a_s)_{s\in\mathbb{N}}$ that satisfies the condition of subadditivity
$a_{s+r}\le a_s+a_r$ has the rate that can be equivalently defined as
\begin{align}
  \label{Fekete}
  \lim_{s\to\infty}\frac{a_s}{s}=\inf_{s\in\mathbb{N}}\frac{a_s}{s}.
\end{align}
  Hence having subadditivity (\ref{Subadditive}) and
  (\ref{SubadditiveV}), we can show that rates (\ref{SupEntropy}) and
  (\ref{SupTypes}) can be equivalently expressed somewhat differently.
\begin{proposition}
  \label{theoSupEntropy}
  Let $(X_t)_{t\in\mathbb{N}}$ be a process such that $h<\infty$ for
  the entropy rate (\ref{SupEntropy}).  For an arbitrary $t\ge 0$, we
  have
  \begin{align}
    h
    =\lim_{s\to\infty}\sup_{k\ge t} \frac{H(X_{k+1}^{k+s}|X_{1}^{t})}{s}
    =\inf_{s\in\mathbb{N}}\sup_{k\ge t} \frac{H(X_{k+1}^{k+s}|X_{1}^{t})}{s}.
  \end{align}
\end{proposition}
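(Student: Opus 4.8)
The plan is to treat the quantity as a subadditive sequence in the block length $s$, apply Fekete's lemma to get the first equality (limit equals infimum), and then pin down the common value by a two-sided estimate. Throughout, fix $t\ge 0$ and abbreviate $\phi_t(s):=\sup_{k\ge t}H(X_{k+1}^{k+s}|X_1^t)$ and $\psi(s):=\sup_{k\ge 0}H(X_{k+1}^{k+s})$, so that $h=\inf_s \psi(s)/s$ by definition and the quantity in the statement is exactly $\phi_t(s)/s$. First I would record finiteness: since $h<\infty$ there is an $s_0$ with $\psi(s_0)<\infty$, and ordinary subadditivity $H(X,Y)\le H(X)+H(Y)$ makes $\psi$ subadditive, so $\psi(s)\le\lceil s/s_0\rceil\,\psi(s_0)<\infty$ for every $s$; in particular $H(X_1^t)\le\psi(t)<\infty$ and $\phi_t(s)\le\psi(s)<\infty$.

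For the first equality I would establish that $\phi_t$ is subadditive. Applying the conditional subadditivity (\ref{Subadditive}) with $Q=X_1^t$ and splitting the block $X_{k+1}^{k+s_1+s_2}$ at position $k+s_1$ gives $H(X_{k+1}^{k+s_1+s_2}|X_1^t)\le H(X_{k+1}^{k+s_1}|X_1^t)+H(X_{k+s_1+1}^{k+s_1+s_2}|X_1^t)$. The key point is that for $k\ge t$ both sub-blocks start at indices $\ge t$ (the second starts at $k+s_1\ge t$), so each term is bounded by the respective supremum, and taking $\sup_{k\ge t}$ on the left yields $\phi_t(s_1+s_2)\le\phi_t(s_1)+\phi_t(s_2)$. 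Fekete's lemma then gives $\lim_{s\to\infty}\phi_t(s)/s=\inf_s\phi_t(s)/s$, which is precisely the first claimed identity.

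Finally I would identify the common value with $h$. Conditioning cannot increase entropy, so $\phi_t(s)\le\psi(s)$, and hence $\inf_s\phi_t(s)/s\le\inf_s\psi(s)/s=h$. For the matching lower bound, dropping the conditioning costs at most $H(X_1^t)$, that is $H(X_{k+1}^{k+s}|X_1^t)\ge H(X_{k+1}^{k+s})-H(X_1^t)$ (from $H(X_{k+1}^{k+s})\le H(X_1^t)+H(X_{k+1}^{k+s}|X_1^t)$), so $\phi_t(s)\ge\psi_t(s)-H(X_1^t)$ where $\psi_t(s):=\sup_{k\ge t}H(X_{k+1}^{k+s})$. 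The main obstacle is that $\psi_t$ restricts the supremum to $k\ge t$ whereas $h$ uses $k\ge 0$; I would bridge this gap by handling the finitely many blocks with $0\le k<t$, for which $H(X_{k+1}^{k+s})\le H(X_1^{k+s})=H(X_1^t)+H(X_{t+1}^{k+s}|X_1^t)\le H(X_1^t)+\psi_t(s)$, giving $\psi(s)\le H(X_1^t)+\psi_t(s)$.

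Chaining the last two bounds yields the sandwich $\psi(s)-2H(X_1^t)\le\phi_t(s)\le\psi(s)$. Since $\psi$ is subadditive, Fekete's lemma also gives $\lim_s\psi(s)/s=\inf_s\psi(s)/s=h$; dividing the sandwich by $s$ and letting $s\to\infty$ (the additive constant $2H(X_1^t)$ is annihilated by $1/s$) then forces $\lim_s\phi_t(s)/s=h$, which combined with the first equality shows that both the limit and the infimum equal $h$. I expect the only delicate part to be the bookkeeping for the $0\le k<t$ terms together with the edge cases $t=0$ and empty sub-blocks (where the chain rule degenerates and the inequalities hold trivially); everything else is a direct application of subadditivity and Fekete's lemma.
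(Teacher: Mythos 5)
Your proposal is correct and follows essentially the same route as the paper: conditional subadditivity of $s\mapsto\sup_{k\ge t}H(X_{k+1}^{k+s}|X_1^t)$ plus Fekete's lemma for the limit--infimum identity, then a two-sided sandwich with additive constants (here $2H(X_1^t)$, in the paper $B(t)+D(t)$) that vanish after dividing by $s$, identifying the common value with $h$. The only cosmetic difference is that the paper builds its sandwich by shifting blocks by $t$ and then dropping the conditioning, whereas you compare conditional to unconditional entropy directly and handle the $0\le k<t$ indices separately; both are valid and of the same difficulty.
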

\noindent
\begin{proof}
  By inequality (\ref{Subadditive}), we notice subadditivity
  \begin{align}
    \sup_{k\ge t} H(X_{k+1}^{k+s+r}|X_{1}^{t})
    &\le
      \sup_{k\ge t}
      \kwad{H(X_{k+1}^{k+s}|X_{1}^{t})
      +
      H(X_{k+s+1}^{k+s+r}|X_{1}^{t})}
      \nonumber\\
    &\le
      \sup_{k\ge t}
      H(X_{k+1}^{k+s}|X_{1}^{t})
      +
      \sup_{k\ge t}
      H(X_{k+1}^{k+r}|X_{1}^{t}).
  \end{align}
  Hence by the Fekete lemma \citep{Fekete23}, we obtain
  \begin{align}
    h(t)
    :=\lim_{s\to\infty}\sup_{k\ge t} \frac{H(X_{k+1}^{k+s}|X_{1}^{t})}{s}
    =\inf_{s\in\mathbb{N}}\sup_{k\ge t} \frac{H(X_{k+1}^{k+s}|X_{1}^{t})}{s}.
  \end{align}
  It suffices to show $h(t)=h$.  Since $h<\infty$, we have
  $\sup_{t\ge 0} H(X_t)<\infty$. Hence, by the chain rule and a simple
  calculation, we obtain a uniform bound
  \begin{align}
    \abs{H(X_{k+1}^{k+s})-H(X_{k+t+1}^{k+t+s})}\le B(t)<\infty.
  \end{align}
  For the same reason, we also have
  \begin{align}
   \abs{H(X_{k+t+1}^{k+t+s})-H(X_{k+t+1}^{k+t+s}|X_{1}^{t})}\le D(t)<\infty.
  \end{align}
  Chaining these two sandwich bounds and taking the supremums over $k$
  and infimums over $s$, we infer $h(t)=h$.
\end{proof}
\begin{proposition}
  \label{theoSupTypes}
  Let $(K_t)_{t\in\mathbb{N}}$ be an arbitrary process. For an
  arbitrary $t\ge 0$, we have
  \begin{align}
    v
    =\lim_{s\to\infty} \sup_{k\ge t}
    \frac{V(\mathcal{K}_{k+1}^{k+s}\setminus\mathcal{K}_{1}^{t})}{s}
    =\inf_{s\in\mathbb{N}} \sup_{k\ge t}
    \frac{V(\mathcal{K}_{k+1}^{k+s}\setminus\mathcal{K}_{1}^{t})}{s}.
  \end{align}
\end{proposition}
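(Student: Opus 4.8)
The plan is to imitate the proof of Proposition~\ref{theoSupEntropy}, replacing the entropy inequalities by their cardinality analogues, while noting that the cardinality case is actually cleaner because it needs no finiteness assumption. Write $f_t(s):=\sup_{k\ge t} V(\mathcal{K}_{k+1}^{k+s}\setminus\mathcal{K}_{1}^{t})$. First I would establish subadditivity of $f_t$ in $s$. Splitting a block as $\mathcal{K}_{k+1}^{k+s+r}=\mathcal{K}_{k+1}^{k+s}\cup\mathcal{K}_{k+s+1}^{k+s+r}$ and applying the cardinality subadditivity (\ref{SubadditiveV}) with $\mathcal{Q}=\mathcal{K}_{1}^{t}$ gives
\begin{align}
  V(\mathcal{K}_{k+1}^{k+s+r}\setminus\mathcal{K}_{1}^{t})
  \le V(\mathcal{K}_{k+1}^{k+s}\setminus\mathcal{K}_{1}^{t})
  + V(\mathcal{K}_{k+s+1}^{k+s+r}\setminus\mathcal{K}_{1}^{t}).
\end{align}
Taking the supremum over $k\ge t$ and reindexing the second term by $k'=k+s\ge t$ yields $f_t(s+r)\le f_t(s)+f_t(r)$. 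By the Fekete lemma \citep{Fekete23}, the limit $v(t):=\lim_{s\to\infty}f_t(s)/s$ exists and equals $\inf_{s\in\mathbb{N}}f_t(s)/s$, which establishes the two right-hand equalities in the statement. The $t=0$ instance of this same argument (recall $\mathcal{K}_1^0=\emptyset$) shows that $v=v(0)=\lim_{s\to\infty}g(s)/s$, where $g(s):=\sup_{k\ge 0}V(\mathcal{K}_{k+1}^{k+s})$. It then remains to prove $v(t)=v$.

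For the inequality $v(t)\le v$ I would simply drop the subtracted set: since $V(\mathcal{K}_{k+1}^{k+s}\setminus\mathcal{K}_{1}^{t})\le V(\mathcal{K}_{k+1}^{k+s})$ and $\sup_{k\ge t}\le\sup_{k\ge 0}$, we get $f_t(s)\le g(s)$, and dividing by $s$ and taking the infimum over $s$ gives $v(t)\le v$.

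The reverse inequality $v\le v(t)$ is the crux, and the main obstacle is the mismatch between the supremum over $k\ge 0$ hidden in $v$ and the supremum over $k\ge t$ in $v(t)$; this genuinely matters because the process is not assumed stationary, so short initial blocks cannot be discarded for free. The decisive observation is the deterministic bound $\card\mathcal{K}_{1}^{t}\le t$, hence $V(\mathcal{K}_{1}^{t})\le t$, which plays the role that the uniform constants $B(t),D(t)$ play in the entropy case but requires no hypothesis analogous to $h<\infty$. I would split a length-$(s+t)$ block as $\mathcal{K}_{k+1}^{k+s+t}=\mathcal{K}_{k+1}^{k+t}\cup\mathcal{K}_{k+t+1}^{k+t+s}$ and use subadditivity together with $V(\mathcal{K}_{k+1}^{k+t})\le t$ and $V(\mathcal{K}_{k'+1}^{k'+s}\cap\mathcal{K}_{1}^{t})\le t$, obtaining after reindexing $k'=k+t\ge t$
\begin{align}
  g(s+t)
  =\sup_{k\ge 0}V(\mathcal{K}_{k+1}^{k+s+t})
  \le 2t + \sup_{k'\ge t}V(\mathcal{K}_{k'+1}^{k'+s}\setminus\mathcal{K}_{1}^{t})
  = 2t + f_t(s).
\end{align}
Dividing by $s$ and letting $s\to\infty$, the additive constant $2t$ washes out and the left-hand side tends to $v$ (dividing the same quantity by $s$ or by $s+t$ yields the same limit), so $v\le v(t)$. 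Combining the two inequalities gives $v(t)=v$ and completes the proof. The only place demanding care is this final reindexing-and-shift step; everything else is a direct transcription of the entropy argument, with the a priori bound $V(\mathcal{K}_{1}^{t})\le t$ doing the work.
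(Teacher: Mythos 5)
Your proof is correct and is essentially the paper's own argument: the paper proves this proposition by declaring it ``mutatis mutandis'' identical to the entropy case, and your transcription makes exactly the intended substitutions --- subadditivity via (\ref{SubadditiveV}) plus the Fekete lemma, with the deterministic bound $V(\mathcal{K}_{1}^{t})\le t$ standing in for the uniform constants $B(t),D(t)$ and thereby explaining why no hypothesis analogous to $h<\infty$ is needed.
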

\noindent
\begin{proof}
  Mutatis mutandis, the same as the proof of Proposition
  \ref{theoSupEntropy}.
\end{proof}

In view of the above two propositions, we have
\begin{align}
  \sup_{k\ge t} H(X_{k+1}^{k+s}|X_{1}^{t})&\ge hs,
  \\
  \sup_{k\ge t} V(\mathcal{K}_{k+1}^{k+s}\setminus\mathcal{K}_{1}^{t})&\ge vs.
\end{align}

\subsection{Stationary processes}
\label{secStationary}

In case of stationary processes, the translation symmetry yields some
further results. By a general result, any stationary process with
natural number indices can be uniquely extended to the stationary
process with integer indices \citep[Lemma 9.2]{Kallenberg97}. Thus,
for stationary processes $(X_t)_{t\in\mathbb{Z}}$ and
$(K_t)_{t\in\mathbb{Z}}$, we denote the the block Shannon entropy and
the expected number of types
\begin{align}
  H(t)&:=H(X_1^t),
  \\
  V(t)&:=V(\mathcal{K}_{1}^{t}).
\end{align}
Let us define the finite difference operator
$\Delta f(t):=f(t+1)-f(t)$. Function $f(t)$ is called positive,
growing, and concave if $f(t)\ge 0$, $\Delta f(t)\ge 0$, and
$\Delta^2 f(t)\le 0$, respectively.

We have two analogous statements. We recall the results for entropy to
apply them by analogy to the expected cardinality.
\begin{proposition}
  \label{theoBlockEntropy}
  Let a stationary process $(X_t)_{t\in\mathbb{Z}}$.  For
  $t\in\mathbb{N}$, we claim that:
  \begin{enumerate}   
  \item Function $t\mapsto H(t)$ is positive, growing, and concave and
    $H(0)=0$.
  \item Function $t\mapsto H(t)/t$ is decreasing.
  \item We have $h=\lim_{t\to\infty}H(t)/t\ge 0$.
  \end{enumerate}  
\end{proposition}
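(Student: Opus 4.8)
The plan is to prove the three claims in order, deriving each from its predecessor wherever possible. I would open with claim~1. The identity $H(0)=H(X_1^0)=0$ is immediate from the stated convention that $X_1^0$ is the empty string, and positivity $H(t)\ge 0$ is just nonnegativity of Shannon entropy. For the growth, I would expand the forward increment by the chain rule as
\begin{align}
  \Delta H(t)=H(X_1^{t+1})-H(X_1^t)=H(X_{t+1}|X_1^t),
\end{align}
which is nonnegative because conditional entropy is nonnegative.

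The concavity is the one genuinely substantive step, and the place I would be most careful. I would combine stationarity with the fact that conditioning cannot increase entropy. Starting from $\Delta H(t)=H(X_{t+1}|X_1^t)$, a shift of the time index (stationarity) rewrites this as $\Delta H(t)=H(X_{t+2}|X_2^{t+1})$, whereas directly $\Delta H(t+1)=H(X_{t+2}|X_1^{t+1})$. Since the conditioning block $X_1^{t+1}$ contains $X_2^{t+1}$, conditioning on more variables gives $\Delta H(t+1)\le\Delta H(t)$, i.e.\ $\Delta^2 H(t)\le 0$. The delicate point is lining up the shift identity with the conditioning inequality so that both conditional entropies refer to the same variable $X_{t+2}$ but with nested conditioning sets; this is exactly where stationarity is used.

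Claim~2 would then follow from claim~1 by a standard averaging argument. Concavity means the increments $a_s:=\Delta H(s)$ form a nonincreasing sequence, and since $H(0)=0$ I can write $H(t)/t=\frac{1}{t}\sum_{s=0}^{t-1}a_s$ as the running average of this sequence. The running average of a nonincreasing sequence is nonincreasing: using $a_t\le\frac{1}{t}\sum_{s=0}^{t-1}a_s=H(t)/t$ one checks $H(t+1)/(t+1)=\frac{t\,(H(t)/t)+a_t}{t+1}\le H(t)/t$, so $t\mapsto H(t)/t$ is decreasing.

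Finally, for claim~3 I would specialize the definition~(\ref{SupEntropy}) of the entropy rate to the stationary case. By stationarity $H(X_{k+1}^{k+s})=H(X_1^s)=H(s)$ is independent of $k$, so the inner supremum collapses and $h=\inf_{s}H(s)/s$. By claim~2 the sequence $H(s)/s$ is decreasing and bounded below by $0$ through positivity, so its infimum coincides with its limit and is nonnegative, yielding $h=\lim_{t\to\infty}H(t)/t\ge 0$. The only mild subtlety is matching the abstract definition~(\ref{SupEntropy}) to the stationary block entropy; everything else is monotone convergence. Overall I expect the concavity step to be the main obstacle, while positivity, growth, $H(0)=0$, and the final limit are essentially immediate.
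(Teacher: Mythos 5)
Your proposal is correct and follows essentially the same route as the paper: the chain rule gives $\Delta H(t)=H(X_{t+1}|X_1^t)\ge 0$, concavity comes from stationarity plus the fact that conditioning cannot increase entropy (the paper phrases this as $\Delta^2 H(t)=-I(X_0;X_{t+1}|X_1^t)\le 0$, which is the same inequality shifted the other way), the running-average argument for claim~2 is identical, and claim~3 collapses the supremum in~(\ref{SupEntropy}) by stationarity exactly as you describe.
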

\begin{proof}
  See \citep[Section 5.2]{Debowski21}. In general, we have
  \begin{align}
      H(t)
      &=H(X_1^t)\ge 0,
      \\
      \Delta H(t)
      &=H(X_{t+1}|X_1^t)\ge 0,
      \\
      \Delta^2 H(t)
      &=-I(X_0;X_{t+1}|X_1^t)\le 0.
  \end{align}
  Function $t\mapsto \Delta H(t)$ is decreasing since $\Delta^2 H(t)\le
  0$. Hence
    \begin{align}
      \frac{H(t+1)}{t+1}
      =
      \frac{\sum_{i=0}^{t}\Delta H(i)}{t+1}
      &\le
      \frac{
      \sum_{i=0}^{t-1}\Delta H(i)+\frac{\sum_{i=0}^{t-1}\Delta H(i)}{t}
        }{t+1}
        \nonumber\\
      &=
      \frac{\sum_{i=0}^{t-1}\Delta H(i)}{t}
      =
      \frac{H(t)}{t}.
    \end{align}
    Thus function $t\mapsto H(t)/t$ is decreasing and limit
    $\lim_{t\to\infty}H(t)/t$ exists.  By stationarity, it equals $h$
    defined in (\ref{SupEntropy}).
\end{proof}
\begin{proposition}
  \label{theoBlockTypes}
  Let a stationary process $(K_t)_{t\in\mathbb{Z}}$.  For
  $t\in\mathbb{N}$, we claim that:
  \begin{enumerate}   
  \item Function $t\mapsto V(t)$ is positive, growing, and concave and
    $V(0)=0$.
  \item Function $t\mapsto V(t)/t$ is decreasing and $V(t)/t\le 1$.
  \item We have $v=\lim_{t\to\infty} V(t)/t=0$.
  \end{enumerate}  
\end{proposition}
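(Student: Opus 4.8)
The plan is to follow the proof of Proposition~\ref{theoBlockEntropy} step by step, substituting the cardinality counterpart for each entropic identity, and then to add the one genuinely new argument that the third claim requires.

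For the first claim I would record the three finite-difference identities for $V(t)=V(\mathcal{K}_1^t)$. Nonnegativity and $V(0)=0$ are immediate from $V(\mathcal{X})=\mean\card\mathcal{X}$ and the convention $\mathcal{K}_1^0=\emptyset$. Appending the token $K_{t+1}$ enlarges the type set by one exactly when $K_{t+1}$ is new, so $\Delta V(t)=V(\mathcal{K}_{t+1}^{t+1}\setminus\mathcal{K}_1^t)=P(K_{t+1}\notin\mathcal{K}_1^t)\ge 0$, which gives growth. For concavity I would shift indices by stationarity to write $\Delta V(t+1)=P(K_{t+1}\notin\mathcal{K}_0^t)$; since $\mathcal{K}_1^t\subseteq\mathcal{K}_0^t$, the underlying events are nested and $\Delta V(t+1)\le\Delta V(t)$. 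Explicitly $\Delta^2 V(t)=-V(\mathcal{K}_0^0\cap\mathcal{K}_{t+1}^{t+1}\setminus\mathcal{K}_1^t)\le 0$, the exact analog of $\Delta^2 H(t)=-I(X_0;X_{t+1}|X_1^t)$ under the dictionary that matches intersection with mutual information.

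The second claim then needs no new work. Decreasingness of $t\mapsto V(t)/t$ uses only positivity, $V(0)=0$, and concavity, so the telescoping computation in the proof of Proposition~\ref{theoBlockEntropy} applies verbatim with $H$ replaced by $V$; and $V(t)/t\le 1$ follows from $\card\mathcal{K}_1^t\le t$.

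The third claim is where the reasoning departs from the entropy case. Monotonicity and nonnegativity give that $\lim_{t\to\infty}V(t)/t$ exists, and by stationarity $V(\mathcal{K}_{k+1}^{k+s})=V(s)$, so this limit equals $v$. Writing $V(t)/t=t^{-1}\sum_{i=0}^{t-1}\Delta V(i)$ as the Ces\`{a}ro average of the decreasing sequence $\Delta V(i)$, the limit equals $\lim_i\Delta V(i)$; shifting indices and using continuity of measure along the nested events, this equals $q:=P(K_0\notin\klam{K_s:s\le -1})$, the probability that the value $K_0$ never occurred in the past. The hard part is to show $q=0$. I would argue by recurrence: for each value $k$ let $G_k^{(s)}$ denote the event that $s$ is the first time $k$ appears. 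For fixed $k$ these events are pairwise disjoint over $s\in\mathbb{Z}$, since a value has at most one first-occurrence time, while stationarity makes $P(G_k^{(s)})$ independent of $s$; since a countable disjoint family has total probability at most one, the common value $P(G_k^{(0)})$ must vanish. Summing over $k$ yields $q=\sum_k P(G_k^{(0)})=0$, hence $v=0$. The main obstacle is exactly this step: unlike in Proposition~\ref{theoBlockEntropy}, the structural properties alone do not pin down the limit, and one must invoke the recurrence of two-sided stationary sequences---almost surely every observed value has been seen before---to conclude that the cardinality rate vanishes.
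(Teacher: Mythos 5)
Your proposal is correct, and parts 1 and 2 match the paper's proof exactly (the paper likewise records $\Delta V(t)=V(\mathcal{K}_{t+1}\setminus\mathcal{K}_1^t)$ and $\Delta^2 V(t)=-V(\mathcal{K}_0\cap\mathcal{K}_{t+1}\setminus\mathcal{K}_1^t)$ and reuses the telescoping computation from the entropy case). Where you genuinely diverge is claim 3. The paper proves $\lim_t V(t)/t=0$ by a truncation argument credited to Khmaladze: relabel the alphabet as $\mathbb{N}$, apply the union bound to get $V(t)\le g(t)+tP(K_0>g(t))$ for any threshold function $g$, and choose $g(t)=\epsilon t/2$ so that tightness of the marginal law finishes the job. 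You instead identify the limit via the Ces\`aro average of the decreasing sequence $\Delta V(i)$ as $q=P(K_0\notin\klam{K_s:s\le -1})$ and then show $q=0$ by a Poincar\'e-recurrence-style argument: the first-occurrence events $G_k^{(s)}$ are pairwise disjoint in $s$, have equal probability by stationarity, and hence each vanish, so $q=\sum_k P(G_k^{(0)})=0$. Both arguments are valid and elementary. Yours has the advantage of not requiring a relabeling of the alphabet and of giving an operational interpretation of $v$ as the probability that the present token is absent from the entire past; the paper's truncation bound $V(t)\le g(t)+tP(K_0>g(t))$ is more quantitative and yields explicit rates for $V(t)$ when the tail of $K_0$ is known, which is in the spirit of the Heaps-law estimates developed later in the paper.
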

\begin{proof}
  The proof is analogous to the proof of Proposition
  \ref{theoBlockEntropy} except for the last claim. In particular, we
  have
   \begin{align}
      V(t)
      &=V(\mathcal{K}_1^t)\ge 0,
      \\
      \Delta V(t)
      &=V(\mathcal{K}_{t+1}\setminus\mathcal{K}_1^t)\ge 0,
      \\
      \Delta^2 V(t)
      &=-V(\mathcal{K}_0\cap\mathcal{K}_{t+1}\setminus\mathcal{K}_1^t)\le 0.
   \end{align}
   Inequality $V(t)/t\le 1$ is obvious since the number of type is not
   greater than the number tokens.  The proof of
   $\lim_{t\to\infty} V(t)/t=0$ is as follows.  Without loss of
   generality, we assume that the alphabet is the set of natural
   numbers.  Consider a real function $g(t)\ge 0$.  Generalizing an
   idea used by \citet{Khmaladze88} for IID processes, we observe
  \begin{align}
    V(t)
    =
    \mean \sum_{k=1}^\infty \boole{k\in\mathcal{K}_1^t}
    =
    \sum_{k=1}^\infty P(k\in\mathcal{K}_1^t)
    \le
    g(t)+\sum_{k>g(t)} P(k\in\mathcal{K}_1^t),
  \end{align}
  whereas the union bound and stationarity yield
  \begin{align}
    P(k\in\mathcal{K}_1^t)
    =
    P(K_1=k\lor \ldots\lor K_t=k)
    \le
    tP(K_0=k).
  \end{align}
  Thus we have bound
  \begin{align}
    V(t) \le g(t)+tP(K_0>g(t))
  \end{align}
  that holds for any function $g(t)\ge 0$. In particular, for an
  $\epsilon>0$, we may take $g(t)=\epsilon t/2$. For all sufficiently
  large $t$, we observe $P(K_0>g(t))\le \epsilon/2$. Hence, for these
  $t$, we derive $V(t)/t \le \epsilon$. By arbitrariness of
  $\epsilon$, we derive $\lim_{t\to\infty} V(t)/t=0$.
\end{proof}

Denote the set of types that occur exactly $m$ times in sequence
$K_1^t$ as
\begin{align}
  \mathcal{K}_1^t(m):=\klam{k: \sum_{i=1}^t\boole{K_i=k}=m}.
\end{align}
Besides the expected total number of types $V(t)$, let us introduce the
spectrum elements
\begin{align}
  \label{Spectrum}
  V(t|m)=V(\mathcal{K}_1^t(m)),
\end{align}
being the expected numbers of types that occur exactly $m$ times
\citep{Karlin67, Khmaladze88, Baayen01}. The bar in notation $V(t|m)$
is a separator of arguments, not related to conditioning.  In
particular, we have $V(t|m)\ge 0$ and
$\sum_{m\in\mathbb{N}} V(t|m)=V(t)$ so the relative spectrum
$m\mapsto V(t|m)/V(t)$ is a probability distribution.

The first spectrum element $V(t|1)$ is the expected number of types
that occur once. These one-time elements are called \emph{hapax
  legomena} in Greek or, succinctly, hapaxes in English.  In the
stationary case, there is a general upper bound for the expected
number of hapaxes in terms of the difference of the total number of
types.
\begin{proposition}
  \label{theoHapaxes}
  For a stationary process $(K_t)_{t\in\mathbb{Z}}$, we have
  \begin{align}
    \label{Hapaxes}
    \frac{V(t|1)}{t}\le
    \Delta V\okra{\ceil{\frac{t}{2}}}.
  \end{align}
\end{proposition}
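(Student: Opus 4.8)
The plan is to re-express the spectrum element as a sum over the positions at which a hapax can occur, and then bound each positional contribution by a \emph{one-sided novelty} probability that coincides with an increment of $V$. A type $k$ occurs exactly once in $K_1^t$ precisely when there is a unique index $i$ with $K_i=k$, so summing the indicator of this event over $i$ counts every hapax once and gives
\[
V(t|1)=\sum_{i=1}^{t} P\okra{K_i\notin\klam{K_j:\ 1\le j\le t,\ j\ne i}}.
\]
The idea is now to control each summand. The event that $K_i$ is unique in $\klam{1,\dots,t}$ is contained both in $\klam{K_i\notin\klam{K_1,\dots,K_{i-1}}}$ (novelty with respect to the past inside the window) and in $\klam{K_i\notin\klam{K_{i+1},\dots,K_t}}$ (novelty with respect to the future). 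Hence the positional probability is at most the minimum of these two one-sided novelty probabilities.

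Next I would identify each one-sided novelty probability with an increment $\Delta V$ of the expected-cardinality function of Proposition~\ref{theoBlockTypes}. By stationarity, $P(K_i\notin\klam{K_1,\dots,K_{i-1}})=P(K_{s+1}\notin\klam{K_1,\dots,K_s})=\Delta V(s)$ with $s=i-1$, since $\Delta V(s)=V(\mathcal{K}_{s+1}\setminus\mathcal{K}_1^s)$. For the future side I would use the reversal invariance of $V$: the time-reversed process $u\mapsto K_{-u}$ is again stationary and has the same function $V$, because $\mathcal{K}_1^t$ depends only on the unordered set of values and $(K_{-t},\dots,K_{-1})$ is equidistributed with $(K_1,\dots,K_t)$. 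Consequently $P(K_i\notin\klam{K_{i+1},\dots,K_t})=\Delta V(t-i)$. Combining these two identities with the fact that $\Delta V$ is decreasing (concavity of $V$, Proposition~\ref{theoBlockTypes}) bounds each summand by $\min\klam{\Delta V(i-1),\Delta V(t-i)}=\Delta V(\max\klam{i-1,\,t-i})$.

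It remains to sum over $i$. Whenever $\max\klam{i-1,\,t-i}\ge\ceil{t/2}$, monotonicity of $\Delta V$ gives the per-term bound $\Delta V(\ceil{t/2})$, and summing the $t$ terms yields $V(t|1)\le t\,\Delta V(\ceil{t/2})$, which is the claim. This covers \emph{every} index $i$ when $t$ is even (there the balanced positions $i=t/2$ and $i=t/2+1$ already give $\max=t/2=\ceil{t/2}$) and every index except the central one when $t$ is odd.

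The main obstacle is precisely that single central position of an odd-length window, where the two sides are perfectly balanced, $i-1=t-i=\floor{t/2}$, so discarding either side leaves only $\Delta V(\floor{t/2})$ rather than $\Delta V(\ceil{t/2})$. Closing this gap requires a genuinely \emph{two-sided} estimate showing that the balanced avoidance probability $P\okra{K_i\notin\klam{K_{i-\floor{t/2}},\dots,K_{i-1},K_{i+1},\dots,K_{i+\floor{t/2}}}}$ is itself at most $\Delta V(\ceil{t/2})$. I expect this to be the crux: one cannot in general trade two short one-sided constraints for a single longer one-sided constraint, so this step seems to need either an additional structural hypothesis on the process or a more delicate inequality tying the symmetric two-sided novelty probability back to $\Delta V$ at the larger argument.
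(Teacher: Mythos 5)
Your decomposition of $V(t|1)$ over positions, the one-sided bounding by $\min\klam{\Delta V(i-1),\Delta V(t-i)}$, and the appeal to monotonicity of $\Delta V$ are exactly the paper's own proof. The obstacle you flag at the central position of an odd-length window is genuine: the paper's last step silently asserts $\min\klam{\Delta V(i-1),\Delta V(t-i)}\le\Delta V\okra{\ceil{t/2}}$ for every $i$, and at $i=(t+1)/2$ (odd $t$) both arguments equal $\floor{t/2}$, so monotonicity gives $\Delta V\okra{\floor{t/2}}\ge\Delta V\okra{\ceil{t/2}}$, which points the wrong way.

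Moreover, the two-sided estimate you hoped might rescue the $\ceil{\cdot}$ version is false in general, so the gap cannot be closed: the proposition needs $\floor{t/2}$ in place of $\ceil{t/2}$. Consider the stationary period-two process $K_t=1+((t+\Theta)\bmod 2)$ with $\Theta$ uniform on $\klam{0,1}$. Then $V(1)=1$ and $V(s)=2$ for $s\ge 2$, so $\Delta V(2)=0$, while every realization of $K_1^3$ is $(1,2,1)$ or $(2,1,2)$ and contains exactly one hapax, giving $V(3|1)/3=1/3>0=\Delta V\okra{\ceil{3/2}}$. With $\floor{t/2}$ your argument is already complete: if both $i-1$ and $t-i$ were at most $\floor{t/2}-1$ their sum would be at most $t-2<t-1$, so $\max\klam{i-1,t-i}\ge\floor{t/2}$ for every $1\le i\le t$ and each summand is at most $\Delta V\okra{\floor{t/2}}$. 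The only downstream use of the proposition, the first line of the proof of Proposition~\ref{theoLowerHeaps}, applies it with the even argument $2t$, where $\ceil{\cdot}$ and $\floor{\cdot}$ coincide, so the corrected statement suffices for the rest of the paper.
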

\begin{proof}
  For a stationary process, we observe that
  \begin{align}
    V(t|1)
    &=
      \sum_{i=1}^t
      P\okra{K_i\not\in\mathcal{K}_1^{i-1}
      \cup\mathcal{K}_{i+1}^t}
      \nonumber\\
    &\le
      \sum_{i=1}^t
      \min\klam{P\okra{K_i\not\in\mathcal{K}_1^{i-1}},
      P\okra{K_i\not\in\mathcal{K}_{i+1}^t}}
      \nonumber\\
    &=
      \sum_{i=1}^t
      \min\klam{\Delta V(i-1),\Delta V(t-i)}
      \le
      t\Delta V\okra{\ceil{\frac{t}{2}}}
  \end{align}
  since $t\mapsto\Delta V(t)$ is decreasing.
\end{proof}

\subsection{IID processes}
\label{secIID}

So far, the statements for Shannon entropy and expected cardinality
were mirror-like. However, for more specific processes, the analogy
between these two functionals is rather as follows: The expected
cardinality applies to IID processes in a similar fashion as the
Shannon entropy applies to exchangeable Santa Fe processes (those with
an IID narration). This analogy will be applied in Appendix
\ref{secHilbergSantaFe}. Now we consider the expected cardinality for
IID processes, being simpler to analyze.  We notice that the expected
cardinality for IID sources enjoys further special properties. Namely,
it is a Hausdorff sequence---a discrete-time analogue of a Bernstein
function. This observation nicely complements the results by
\citet{Karlin67} for IID and Poisson cases.

The development is as follows.
\begin{definition}
  A sequence $v:\mathbb{N}\cup\klam{0}\to\mathbb{R}$ is called a
  Hausdorff sequence if $v(t)\ge 0$ and $(-1)^{m+1}\Delta^m v(t)\ge 0$
  for all $m\in\mathbb{N}$ and $n\in\mathbb{N}\cup\klam{0}$.  A
  sequence $u:\mathbb{N}\cup\klam{0}\to\mathbb{R}$ is called
  completely alternating if $u(t)=\Delta v(t)$ for a certain Hausdorff
  sequence $v(t)$. We call a sequence $v(t)$ standard if $v(0)=0$ and
  $\Delta v(0)=1$.
\end{definition}
\noindent
\emph{Remark:}
The name \emph{Hausdorff sequence} is non-standard itself. We have
coined it by an analogy to the standard term \emph{Bernstein
  function}, which is the continuous time-analog, applying derivatives
rather than differences \citep{SchillingSongVondracek10}.

Any Hausdorff sequence can be expressed as a convex combination of
sequences $t\mapsto p^{-1}\okra{1-(1-p)^t}$ for varying $p>0$. This
result is known as the Hausdorff moment theorem \citep{Hausdorff23}.
It is a discrete-time analogue of the Bernstein theorem on completely
monotone functions \citep{Bernstein28}, also known as the
L\'evy-Khintchine representation in probability \citep{Levy48}.
\begin{proposition}
  \label{theoMoment}
  A sequence $v:\mathbb{N}\cup\klam{0}\to\mathbb{R}$ is a Hausdorff
  sequence if and only if there exists a unique non-negative measure
  $\tilde v$ on $[0,1]$ such that
  \begin{align}
    \label{FtoVD}
    v(t)&=t\tilde v(\klam{0})
          +
          \int_{(0,1)}\frac{1-(1-p)^t}{p}d\tilde v(p)
          +
          \tilde v(\klam{1}),
    \\
    \label{FtoVPrimeD}
    \Delta v(t)
        &=\tilde v(\klam{0})
          +
          \int_{(0,1)}(1-p)^t d\tilde v(p).
  \end{align}
\end{proposition}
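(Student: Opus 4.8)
The plan is to prove the two implications separately, reducing the nontrivial direction to the classical scalar Hausdorff moment theorem \citep{Hausdorff23} applied to the difference sequence $u(t):=\Delta v(t)$, and then recovering $v$ by summation and a change of variable $s=1-p$.

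First I would dispatch the easy direction. Assuming (\ref{FtoVD})--(\ref{FtoVPrimeD}), I check that $v$ is a Hausdorff sequence. The key computation is that the difference operator acts on the kernel by $\Delta(1-p)^t=(1-p)^{t+1}-(1-p)^t=-p(1-p)^t$, so iterating gives $\Delta^j(1-p)^t=(-p)^j(1-p)^t$. Applying $\Delta^{m-1}$ to (\ref{FtoVPrimeD}) for $m\ge 2$ annihilates the constant $\tilde v(\klam{0})$ and yields
\[
  (-1)^{m+1}\Delta^m v(t)=\int_{(0,1)}p^{m-1}(1-p)^t\,d\tilde v(p)\ge 0,
\]
while the cases $m=1$ and $v(t)\ge 0$ are read off directly from (\ref{FtoVPrimeD}) and (\ref{FtoVD}) since every summand there is nonnegative.

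For the converse I would set $u(t):=\Delta v(t)$ and first verify that the Hausdorff conditions $(-1)^{m+1}\Delta^m v(t)\ge 0$ are equivalent to complete monotonicity of $u$, namely $(-1)^m\Delta^m u(t)\ge 0$ for all $m\ge 0$; this is immediate from $\Delta^m u=\Delta^{m+1}v$ and a shift of the sign index. The classical Hausdorff moment theorem then provides a unique nonnegative measure $\mu$ on $[0,1]$ with $u(t)=\int_{[0,1]}s^t\,d\mu(s)$. I recover $v$ by telescoping, $v(t)=v(0)+\sum_{i=0}^{t-1}u(i)$, and summing the geometric kernel via $\sum_{i=0}^{t-1}s^i=(1-s^t)/(1-s)$ (read as $t$ at $s=1$). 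The substitution $s=1-p$ transports $\mu$ to a measure $\tilde v$ on $[0,1]$: the atom of $\mu$ at $s=1$ produces the linear term $t\tilde v(\klam{0})$, the part of $\mu$ on the open interval produces the integral against $\okra{1-(1-p)^t}/p$, and the free constant $v(0)$ is recorded as $\tilde v(\klam{1})$. Uniqueness of $\tilde v$ follows from uniqueness of $\mu$ together with the normalization that (\ref{FtoVD}) at $t=0$ forces $\tilde v(\klam{1})=v(0)$.

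The step I expect to be the main obstacle is the bookkeeping at the two endpoints $p\in\klam{0,1}$, i.e.\ the atoms of $\mu$ at $s\in\klam{1,0}$. One must simultaneously match the linear-in-$t$ contribution, the constant contribution, and the value at $t=0$, reconciling the kernel conventions ($0^0=1$ and the value $t$ of the geometric sum at $s=1$) so that (\ref{FtoVD}) and (\ref{FtoVPrimeD}) stay mutually consistent; in particular the treatment of the mass near $p=1$, whose summed contribution behaves like an indicator $\boole{t\ge 1}$ rather than a genuine constant, is where care is needed. Once these endpoint atoms are placed correctly and the scalar moment theorem is invoked as a black box, the remaining manipulations are routine changes of variable and summation.
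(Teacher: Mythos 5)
The paper does not actually prove this proposition at all: its ``proof'' is the two-word citation ``See \citep{Hausdorff23, Akhiezer21}.'' So your reduction---complete monotonicity of $u:=\Delta v$, the scalar Hausdorff moment theorem for $u$, summation of the geometric kernel, and the substitution $s=1-p$---is the standard way to actually establish the claim, and your easy direction (differentiating the kernel via $\Delta^j(1-p)^t=(-p)^j(1-p)^t$) is complete and correct.

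However, the endpoint issue you defer as ``bookkeeping'' in the converse is not merely delicate: it is a genuine obstruction, both to your argument and to the proposition as literally stated. Take $v(t)=\boole{t\ge 1}$. Then $\Delta^m v(t)=(-1)^{m+1}\boole{t=0}$ for $m\ge 1$, so $v$ is a Hausdorff sequence in the paper's sense, and the representing measure of $u=\Delta v=\boole{t=0}=0^t$ is $\mu=\delta_0$, i.e.\ all the mass sits at $s=0$ ($p=1$). Its contribution to $v(t)$ is the indicator $\boole{t\ge 1}$, which cannot be absorbed into the constant $\tilde v(\klam{1})$ of (\ref{FtoVD}); and indeed no nonnegative $\tilde v$ satisfies (\ref{FtoVPrimeD}) for this $v$: letting $t\to\infty$ forces $\tilde v(\klam{0})=0$, then $\Delta v(0)=\tilde v((0,1))=1$ while $0=\Delta v(1)=\int_{(0,1)}(1-p)\,d\tilde v(p)$ forces $\tilde v((0,1))=0$. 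So the ``only if'' direction fails unless either (\ref{FtoVPrimeD}) is amended with an extra term $\tilde v(\klam{1})\boole{t=0}$ (correspondingly $\tilde v(\klam{1})\boole{t\ge 1}$ plus a separate free constant $v(0)$ in (\ref{FtoVD})), or the equivalence is restricted to Hausdorff sequences with $\lim_{m\to\infty}(-1)^{m}\Delta^{m+1}v(0)=0$. Relatedly, your prescription ``record the free constant $v(0)$ as $\tilde v(\klam{1})$'' leaves the atom $\mu(\klam{0})$ with nowhere to go. For the paper's application this is harmless ($V(0)=0$ and the measure (\ref{Points}) has an atom at $p=1$ only for a degenerate one-symbol source), but a complete proof must either add the missing term or explicitly exclude this degenerate component; as written, your final step would fail on the example above.
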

\noindent
\emph{Remark:} We call measure $\tilde v$ the Hausdorff measure of
sequence $v(t)$. A Hausdorff sequence $v(t)$ is standard if and only
if $\tilde v(\klam{1})=0$ and the measure $\tilde v$ is a probability
measure.
\begin{proof}
  See \citep{Hausdorff23, Akhiezer21}.
\end{proof}

Now let us define another non-standard object. The bar in notation
$v(t\|m)$ is a separator of arguments, unrelated to conditioning.
\begin{definition}
  For an arbitrary sequence $v:\mathbb{N}\cup\klam{0}\to\mathbb{R}$,
  we define its Taylor elements
\begin{align}
  \label{Taylor}
  v(t\|m)
  &:=
    (-1)^{m+1}\binom{t}{m}\Delta^m v(t-m),
  & 1&\le m\le t.
\end{align}
\end{definition}
We notice that $v(t\|m)\ge 0$ if and only if sequence $v(t)$ is a
Hausdorff sequence.
\begin{proposition}
  The Taylor elements satisfy consistency conditions
  \begin{align}
    \label{Consistent}
    \sum_{m=1}^\infty v(t\|m)&=v(t),
    &
      \sum_{m=1}^\infty mv(t\|m)&=t. 
  \end{align}
  if and only if sequence $v(t)$ is standard.
\end{proposition}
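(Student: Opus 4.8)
The plan is to pass from the forward differences appearing in definition (\ref{Taylor}) to backward differences, after which both consistency conditions collapse to short telescoping identities that expose the role of $v(0)$ and $\Delta v(0)$. I would introduce the shift operator $E$ with $Ev(t)=v(t+1)$, so that $\Delta=E-I$, and write $\nabla:=I-E^{-1}$ for the backward difference. The key observation is the operator identity $\Delta^m v(t-m)=\nabla^m v(t)$, which follows from $(E-I)^m E^{-m}=(I-E^{-1})^m$; one checks it immediately by expanding both sides with the binomial theorem and reindexing $j=m-i$, so that both become $\sum_{j=0}^m(-1)^j\binom{m}{j}v(t-j)$. Substituting this into (\ref{Taylor}) gives the compact form $v(t\|m)=(-1)^{m+1}\binom{t}{m}\nabla^m v(t)$. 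Since $\binom{t}{m}=0$ for $m>t$, every sum over $m$ is in fact finite, running from $1$ to $t$, and no positivity or monotonicity of $v$ is needed—appropriate, since the claim concerns arbitrary sequences.

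For the first condition I would evaluate the operator $\sum_{m=1}^t(-1)^{m+1}\binom{t}{m}\nabla^m$ directly. By the binomial theorem it equals $I-(I-\nabla)^t=I-E^{-t}$, using $I-\nabla=E^{-1}$. Applying this to $v(t)$ yields $\sum_{m=1}^t v(t\|m)=v(t)-v(0)$. Hence the first consistency condition holds—for any single $t\ge 1$, and therefore equivalently for all $t$—if and only if $v(0)=0$.

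For the second condition the useful algebraic fact is $\sum_{m=1}^t m(-1)^{m+1}\binom{t}{m}x^m=tx(1-x)^{t-1}$, obtained by differentiating $(1-x)^t$ and multiplying by $-x$. Substituting the operator $x=\nabla$ and again using $1-\nabla=E^{-1}$ gives $\sum_{m=1}^t m(-1)^{m+1}\binom{t}{m}\nabla^m=t\,\nabla E^{-(t-1)}=t\bigl(E^{-(t-1)}-E^{-t}\bigr)$. Applied to $v(t)$ this produces $\sum_{m=1}^t m\,v(t\|m)=t\bigl(v(1)-v(0)\bigr)=t\,\Delta v(0)$, so the second condition holds if and only if $\Delta v(0)=1$.

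Combining the two evaluations, both consistency conditions hold if and only if $v(0)=0$ and $\Delta v(0)=1$, which is exactly the definition of a standard sequence. The only genuinely delicate point is the operator identity $\Delta^m v(t-m)=\nabla^m v(t)$ together with careful bookkeeping of the finite summation ranges; once these are secured, the two displayed evaluations are routine applications of the binomial theorem and its derivative, and the ``if and only if'' follows by reading the equalities $\sum_m v(t\|m)=v(t)-v(0)$ and $\sum_m m\,v(t\|m)=t\,\Delta v(0)$ in both directions.
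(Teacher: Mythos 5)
Your proof is correct and follows essentially the same route as the paper: both rest on the Newton binomial expansion $(1-\Delta)^r=\sum_m\binom{r}{m}(-\Delta)^m$ (your identity $I-(I-\nabla)^t=I-E^{-t}$ is exactly the paper's formula at $r=t$, and your differentiated identity $tx(1-x)^{t-1}$ replaces the paper's comparison of the cases $r=t$ and $r=t-1$, which amounts to the same Pascal-rule manipulation). Your write-up is a more explicit, fully detailed version of the paper's one-line proof, and the reduction of both conditions to $v(0)=0$ and $\Delta v(0)=1$ is handled correctly in both directions.
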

\begin{proof}
  The claim follows by the Newton formula
  $(1-\Delta)^r=\sum_{m=0}^r \binom{r}{m} (-\Delta)^m$, written as
  \begin{align}
    \label{Newton}
    v(t-r)
    &=
      \sum_{m=0}^r (-1)^m
      \binom{r}{m} \Delta^m v(t-m),
  \end{align}
  which resembles the Taylor expansion. It suffices to consider
  (\ref{Newton}) for $r=t$ and $r=t-1$. In the first case, we obtain
  \begin{align}
    v(t)-\sum_{m=1}^t v(t\|m)
    =
    \sum_{m=0}^t (-1)^m
    \binom{t}{m} \Delta^m v(t)
    =
    v(t-t)
    =
    v(0)
    .
  \end{align}
  In the second case, we use identity
  $m\binom{t}{m}=t\binom{t-1}{m-1}$, which yields
  \begin{align}
    \sum_{m=1}^t mv(t\|m)
    &=
      \sum_{m=1}^t (-1)^{m+1}
      m\binom{t}{m} \Delta^m v(t)
      =
      t\sum_{m=1}^t (-1)^{m+1}
      \binom{t-1}{m-1} \Delta^m v(t)
      \nonumber\\
    &=
      t\sum_{j=0}^{t-1} (-1)^j\binom{t-1}{j} \Delta^{j+1} v(t)
      =
      t\Delta v(t-t+1)
      =
      t\Delta v(1)
      .
  \end{align}
  Hence conditions (\ref{Consistent}) hold if and only if sequence
  $v(t)$ is standard.
\end{proof}

Without loss of generality, let us assume that the alphabet of an IID
process $(K_t)_{t\in\mathbb{Z}}$ is the set of natural numbers.  By
the Bernoulli scheme, the expected number of types and the spectrum
elements defined via (\ref{Spectrum}) equal
\begin{align}
  V(t)
  &=\sum_{k\in\mathbb{N}} (1-(1-p_k)^{t}),
  \\
  \label{IIDSpectrum}
  V(t|m)
  &=
    \sum_{k\in\mathbb{N}} \binom{t}{m} p_k^{m}(1-p_k)^{t-m},
  & 1&\le m\le t,
\end{align}
where $p_k:=P(K_0=k)$. By formula (\ref{IIDSpectrum}) and identity
$\Delta (1-p_k)^{t}=-p_k(1-p_k)^{t}$, the spectrum elements $V(t|m)$
and the Taylor elements of $V(t)$ are equal,
\begin{align}
  \label{SpectrumTaylor}
  V(t|m)=V(t\|m).
\end{align}
Consequently, sequence $V(t)$ is a Hausdorff sequence. Moreover,
sequence $V(t)$ is standard and spectrum elements $V(t|m)$ satisfy
consistency conditions analogous to (\ref{Consistent}).  The Hausdorff
measure of $V(t)$ is an atomic probability measure and assumes form
\begin{align}
  \label{Points}
  \tilde V(A)=\sum_{k\in\mathbb{N}} p_k\boole{p_k\in A}.
\end{align}
By formulas (\ref{Taylor}) and (\ref{SpectrumTaylor}), the expected
rate of hapaxes for IID processes equals
\begin{align}
  \label{IIDHapaxes}
  \frac{V(t|1)}{t}=\Delta V(t-1),
\end{align}
which can be contrasted with the more general inequality
(\ref{Hapaxes}) for stationary sources.

\section{Proofs of the main results}
\label{secImplications}

It is high time to demonstrate the chain of implications
(\ref{Implications}).  Respectively, in
Appendices~\ref{secSandwichZipf}--\ref{secHilbergNeural}, we
demonstrate Theorems \ref{theoSandwichZipf}--\ref{theoHilbergNeural}.

\subsection{Proof of Theorem \ref{theoSandwichZipf}}
\label{secSandwichZipf}

Bounding the sums by integrals, may evaluate
\begin{align}
  \sum_{k\in\mathbb{N}} \boole{p_k\ge p}
  &\le \int_0^\infty \boole{A_Z^+k^{-1/\beta}\ge p}dk
    \nonumber\\
  &=\int_0^\infty\boole{k\le\okra{\frac{p}{A_Z^+}}^{-\beta}}dk
    =\okra{\frac{p}{A_Z^+}}^{-\beta},
  \\
  \sum_{k\in\mathbb{N}} p_k\boole{p_k\le p}
  &\ge \int_1^\infty A_Z^-k^{-1/\beta} \boole{A_Z^+k^{-1/\beta}\le p}dk
    \nonumber\\
  &= \int_1^\infty A_Z^-k^{-1/\beta} \boole{k\ge\okra{\frac{p}{A_Z^+}}^{-\beta}}dk
      = \frac{A_Z^-}{1/\beta-1}\okra{\frac{p}{A_Z^+}}^{1-\beta}.
\end{align}

\subsection{Proof of Theorem \ref{theoUpperHeaps}}
\label{secUpperHeaps}

Since $t\mapsto\Delta V(t)$ is decreasing, we derive
\begin{align}
  \frac{V(\mathcal{K}_{t+1}^{t+s}\setminus\mathcal{K}_{1}^{t})}{s}
  &\le \Delta V(t).
\end{align}
By (\ref{IIDHapaxes}) and (\ref{IIDSpectrum}), we may bound
\begin{align}
  \Delta V(t)
  &=\frac{V(t+1|1)}{t+1}
    =\sum_{k\in\mathbb{N}} p_k(1-p_k)^t
    \nonumber\\
  &\le\sum_{k\in\mathbb{N}}\int_0^{p_k}(1-p)^tdp
    =\int_0^1 \okra{\sum_{k\in\mathbb{N}} \boole{p_k\ge p}}(1-p)^tdp
    \nonumber\\
  &\le A_S^+\int_0^1 p^{-\beta}(1-p)^tdp.
\end{align}
Further evaluation yields
\begin{align}
  \int_0^1(1-p)^t p^{-\beta}dp
  &=
    \frac{\Gamma(t+1)\Gamma(1-\beta)}{\Gamma(t+2-\beta)}
      \le \frac{\Gamma(1-\beta)}{t^{1-\beta}}.
\end{align}

\subsection{Proof of Theorem \ref{theoLowerHeaps}}
\label{secLowerHeaps}

Since $t\mapsto\Delta V(t)$ is decreasing, we derive
\begin{align}
  \frac{V(\mathcal{K}_{t+1}^{t+s}\setminus\mathcal{K}_{1}^{t})}{s}
  &\ge \Delta V(t+s).
\end{align}
By (\ref{Hapaxes}), using the union bound, we may write
\begin{align}
  \Delta V(t)
  &\ge\frac{V(2t|1)}{2t}
    =\frac{1}{2t}\sum_{i=1}^{2t}    
    P\okra{K_i\not\in\mathcal{K}_1^{i-1}
    \cup\mathcal{K}_{i+1}^{2t}}
    \nonumber\\
  &=\sum_{k\in\mathbb{N}} 
    \frac{1}{2t}\sum_{i=1}^{2t}
    P\okra{K_i=k\not\in\mathcal{K}_1^{i-1}
    \cup\mathcal{K}_{i+1}^{2t}}
    \nonumber\\
  &=\sum_{k\in\mathbb{N}} 
    \max\klam{0,
    \frac{1}{2t}\sum_{i=1}^{2t}
    P\okra{K_i=k\not\in\mathcal{K}_1^{i-1}
    \cup\mathcal{K}_{i+1}^{2t}}}
    \nonumber\\
  &\ge\sum_{k\in\mathbb{N}} 
    \max\klam{0,
    2p_k-\frac{1}{2t}\sum_{i=1}^{2t}\sum_{j=1}^{2t} P(K_i=k,K_j=k)}
    \nonumber\\
  &\ge\sum_{k\in\mathbb{N}} p_k\max\klam{0,1-A_M (2t-1)p_k^\delta}
    \ge\sum_{k\in\mathbb{N}} p_k\max\klam{0,1-2A_M tp_k^\delta}
    \nonumber\\
  &\ge\frac{1}{2}\sum_{k\in\mathbb{N}} p_k\boole{p_k^\delta\le \frac{1}{4A_M t}}
   \ge \frac{A_S^-}{2}\okra{\frac{1}{4A_M}}^{\frac{1-\beta}{\delta}}.
\end{align}

\subsection{Proof of Theorem \ref{theoHilbergSantaFe}}
\label{secHilbergSantaFe}

Without loss of generality, we may assume $P(K_t=k)>0$ for all
$k\in\mathbb{N}$ since we may re-index sequence
$(Z_k)_{k\in\mathbb{N}}$ eliminating those variables $Z_k$ for which
$P(K_t=k)=0$. Then we proceed as follows.

The main idea of the proof is contained in this paragraph.  By
independence of $(K_t)_{t\in\mathbb{Z}}$ and $(Z_k)_{k\in\mathbb{N}}$
and condition (\ref{KnowledgeEntropySantaFe}), we may decompose
\begin{align}
  H(X_1^t)
  &=
    H(K_1^t)+H(X_1^t|K_1^t)
    \nonumber\\
  &=
    H(K_1^t)
    +H(\klam{(k,Z_k):k\in\mathcal{K}_{1}^{t}}|\mathcal{K}_{1}^{t})
    \nonumber\\
  &\ge
    H(K_1^t)+A_KV(\mathcal{K}_{1}^{t}).
\end{align}
Analogously, for $k\ge t$, we also obtain the conditional statement
\begin{align}
  H(X_{k+1}^{k+s}|X_{1}^{t})
  &=
    H(K_{k+1}^{k+s}|X_{1}^{t})+H(X_{k+1}^{k+s}|X_{1}^{t},K_{k+1}^{k+s})
    \nonumber\\
  &=
    H(K_{k+1}^{k+s}|K_{1}^{t})
    +H\okra{\klam{(k,Z_k):k\in\mathcal{K}_{k+1}^{k+s}\setminus\mathcal{K}_{1}^{t}}
    \middle|
    \mathcal{K}_{k+1}^{k+s}\setminus\mathcal{K}_{1}^{t}}
    \nonumber\\
  &\ge
    H(K_{k+1}^{k+s}|K_{1}^{t})
    +A_KV(\mathcal{K}_{k+1}^{k+s}\setminus\mathcal{K}_{1}^{t}).
\end{align}
Since conditioning decreases entropy, we derive
\begin{align}
  H(K_{k+1}^{k+s}|K_{1}^{t})\ge
  H(K_{k+1}^{k+s}|K_{-\infty}^{k})=sH(K_{t+1}|K_{-\infty}^{t})
\end{align}
by stationarity and the chain rule for conditional entropy generalized
to $\sigma$-fields \citep[Section 5.3]{Debowski21}. Hence we derive
(\ref{HilbergConditionalSantaFe}) if we prove the equality of
entropy rates
\begin{align}
  \label{EntropyRateEq}
  H(K_{t+1}|K_{-\infty}^{t})=H(X_{t+1}|X_{-\infty}^{t})=h.
\end{align}
Equality (\ref{EntropyRateEq}) seems intuitive but its formal proof is
not so short since there may arise some problems if
$H(X_t)=\infty$ and $H(X_t|X_{-\infty}^{t-1})=0$.

We prove (\ref{EntropyRateEq}) by applying the toolbox from
\citep{Debowski21} and the classical results from \citet{Breiman57}.
First, we demonstrate the left-most equality in (\ref{EntropyRateEq}).
For this goal, we observe the following facts:
\begin{itemize}
\item Since $P(K_t=k)>0$ for any $k\in\mathbb{N}$ and
  $(K_t)_{t\in\mathbb{Z}}$ is stationary ergodic then 
  $\lim_{t\to\infty} P(k\in\mathcal{K}_1^t)=1$ by the Birkhoff ergodic
  theorem \citep[Theorem 4.6]{Debowski21} and the Riesz theorem
  \citep[Theorem 3.27]{Debowski21}.
\item For each $k\in\mathbb{N}$, there is a function $g_k$ of strings
  such that for any $s\in\mathbb{Z}$ and $t\in\mathbb{N}$, we have
  $Z_k=g_k(X_{s+1}^{s+t})$ if $k\in\mathcal{K}_{s+1}^{s+t}$.

  \emph{In plain words, we know the correct value of $Z_k=z$ if we
    find a pair $(k,z)$ in $\phi(X_{s+1}^{s+t})$.}
\item As a result of two above points, by \citep[Theorem
  8.10]{Debowski21}, for each $k\in\mathbb{N}$, the completion of the
  $\sigma$-field of $Z_k$ is contained in the completion of the
  shift-invariant $\sigma$-field of $(X_t)_{t\in\mathbb{Z}}$.

  \emph{In plain words, it means that there is a function $g^*_k$ of
    infinite sequences such that for any $s\in\mathbb{Z}$, we have
    $Z_k=g^*_k((X_{t+s})_{t\in\mathbb{Z}})$ almost surely.}
\item On the other hand, we invoke that the completion of the
  shift-invariant $\sigma$-field is contained in the intersection of
  the completions of the tail $\sigma$-field of past and the tail
  $\sigma$-field of future of process $(X_t)_{t\in\mathbb{Z}}$
  \citep[Theorem 5.34]{Debowski21}.

  \emph{In plain words, it means that for any $t\in\mathbb{Z}$ there
    are functions $g^+_k$ and $g^-_k$ of semi-infinite sequences such
    that $Z_k=g^-_k(X_{-\infty}^{t})$ and $Z_k=g^+_k(X_{t}^{\infty})$
    almost surely.}

\end{itemize}
Hence, the completion of the $\sigma$-field of $X_{-\infty}^{t}$
contains the completion of the $\sigma$-field of $Z_1^\infty$.
 
The rest is a straightforward application of the calculus of Shannon
information measures for $\sigma$-fields \citep[Sections 5.3 and
5.4]{Debowski21}. We make the following observations:
\begin{itemize}
\item As it has been shown, the completion of the $\sigma$-field of
  $X_{-\infty}^{t}$ contains the completion of the $\sigma$-field of
  $Z_1^\infty$, so
\begin{align}
  H(X_{t+1}|X_{-\infty}^{t})=H(X_{t+1}|X_{-\infty}^{t},Z_1^\infty).
\end{align}
\item Since the $\sigma$-fields of $K_{-\infty}^{t+1}$ and $Z_1^\infty$
  are independent, we also have
\begin{align}
  H(K_{t+1}|K_{-\infty}^{t})=H(K_{t+1}|K_{-\infty}^{t},Z_1^\infty).
\end{align}
\item The $\sigma$-field of $(X_{-\infty}^{t},Z_1^\infty)$ equals the
  the $\sigma$-field of $(K_{-\infty}^{t},Z_1^\infty)$ by
  decomposition (\ref{SantaFeDisguised}).
\item Moreover, the $\sigma$-field of $Z_{K_{t+1}}$ is contained in
  the $\sigma$-field of $(K_{-\infty}^{t+1},Z_1^\infty)$, so
\begin{align}
  H(Z_{K_{t+1}}|K_{-\infty}^{t+1},Z_1^\infty)=0.
\end{align}
\end{itemize}
Hence by the generalized chain rule, we may write
\begin{align}
  H(X_{t+1}|X_{-\infty}^{t})
  &=H(X_{t+1}|X_{-\infty}^{t},Z_1^\infty)
    \nonumber\\
  &= H(X_{t+1}|K_{-\infty}^{t},Z_1^\infty)
    \nonumber\\
  &= H(K_{t+1}, Z_{K_{t+1}}|K_{-\infty}^{t},Z_1^\infty)
    \nonumber\\
  &= H(K_{t+1}|K_{-\infty}^{t},Z_1^\infty)+H(Z_{K_{t+1}}|K_{-\infty}^{t+1},Z_1^\infty)
    = H(K_{t+1}|K_{-\infty}^{t}),
\end{align}
which is the left-most equality in (\ref{EntropyRateEq}).

Now, we prove the right-most equality in (\ref{EntropyRateEq}).
Applying the calculus of Shannon information measures for
$\sigma$-fields \citep[Sections 5.3 and 5.4]{Debowski21}, we observe
that
\begin{align}
  H(X_t|X_{t-k}^{t-1})&=\mean\okra{-\log P(X_t|X_{t-k}^{t-1})},
  \\
  H(X_t|X_{-\infty}^{t-1})&=\mean\okra{-\log P(X_t|X_{-\infty}^{t-1})}.
\end{align}
We also notice that
$\lim_{k\to\infty} P(X_t|X_{t-k}^{t-1})=P(X_t|X_{-\infty}^{t-1})$
almost surely by the L\'evy law and
\begin{align}
  \mean\sup_{k\in\mathbb{N}}\okra{-\log P(X_t|X_{t-k}^{t-1})}
  \le H(X_t)+\log e<\infty
\end{align}
by a lemma of \citet{Breiman57} and the assumption
$H(X_t)<\infty$. Hence by the Lebesgue dominated convergence, we
obtain
\begin{align}
  \lim_{k\to\infty} H(X_t|X_{t-k}^{t-1})=H(X_t|X_{-\infty}^{t-1}).
\end{align}
It remains to apply the Toeplitz lemma
\begin{align}
  \lim_{k\to\infty}a_k=a \implies
  \lim_{n\to\infty}\frac{1}{n}\sum_{k=1}^{n} a_k=a,
\end{align}
and the Fekete lemma (\ref{Fekete}) to derive the right-most equality
in (\ref{EntropyRateEq}).

\subsection{Proof of Theorem \ref{theoHilbergNeural}}
\label{secHilbergNeural}

We begin the proof with an analysis of a function.  Fix a
$\beta\in(0,1)$.  We denote the function
\begin{align}
  f(s)=f(s,t,c):=(t+s)^{\beta-1}-\frac{c}{s}
\end{align}
for $t,c\ge 0$. Let $r=r(t,c):=\argmax_{s>0} f(s,t,c)$.  We have
\begin{align}
  0
  =
  \left.\frac{df(s)}{ds}\right|_{s=r}
  =
  -(1-\beta)(t+r)^{\beta-2}+\frac{c}{r^2}.
\end{align}
Consequently, $(1-\beta)(t+r)^{\beta-2}r^2 = c$.

Assume first that $t>0$. Then we may bound
\begin{align}
  (1-\beta)t^{\beta-2}r^2
  \ge
  (1-\beta)(t+r)^{\beta-2}r^2
  \ge
  (1-\beta)t^\beta\okra{\frac{r}{t+r}}^2
  .
\end{align}
Hence we obtain the sandwich bound $r_0\le r\le r_2$, where
\begin{align}
  r_0=r_0(t,c)&:=ty,
  &
  r_2=r_2(t,c)&:=\frac{ty}{1-y},
  &
  y&:=\sqrt{\frac{ct^{-\beta}}{1-\beta}}.
\end{align}
Function $f(s)$ is increasing for $s\le r$ and decreasing for
$s\ge r$.  Suppose that $s\le r_2$. For $q=\ceil{r_2/s}$, we may bound
\begin{align}
  r\le sq\le s\okra{\frac{r_2}{s}+1}\le r_2+s\le 2r_2
\end{align}
so we can also bound
\begin{align}
  f(sq,t,c)&\ge
             f(2r_2)
             =t^{\beta-1}\okra{\frac{1-y}{1+y}}^{1-\beta}
             -\frac{c}{t}\okra{\frac{1-y}{2y}}
             .
\end{align}

Now assume that $t=0$ and $c=n$. Then
\begin{align}
  r=r(0,n):=\okra{\frac{n}{1-\beta}}^{\frac{1}{\beta}}
\end{align}
Suppose that $s\le r$. For $q=\ceil{r/s}$, we may bound
\begin{align}
  r\le sq\le s\okra{\frac{r}{s}+1}\le r+s\le 2r
\end{align}
so we can also bound
\begin{align}
  f(sq,0,n)&\ge
         f(2r)
         =\frac{2^{\beta}-1+\beta}{2}
         \okra{\frac{n}{1-\beta}}^{1-\frac{1}{\beta}}
         .
\end{align}
This completes the analysis of function $f(s)$ that will be needed
next.

Now we proceed to the main part of the proof.  By an iterated
application of inequality (\ref{Subadditive}), for any
$q\in\mathbb{N}$, we have inequality
\begin{align}
  \sup_{k\ge t} \frac{H(X_{k+1}^{k+s}|Q)}{s}
  \ge
  \sup_{k\ge t} \frac{H(X_{k+1}^{k+sq}|Q)}{sq}.
\end{align}
By contrast, by inequality (\ref{Triangle}), conditions
(\ref{ComputeBound}) and (\ref{ParameterBound}) imply inequality
\begin{align}
  H(X_{k+1}^{k+s}|Q)
  \ge \max\klam{H(X_{k+1}^{k+s}|X_1^t)-A_H^-c,H(X_{k+1}^{k+s})-A_H^-n}.
\end{align}
Consequently, for $s\le r_2(t,c)$ and $q=\ceil{r_2(t,c)/s}$,
condition (\ref{HilbergConditional}) yields
\begin{align}
  \sup_{k\ge t} \frac{H(X_{k+1}^{k+s}|Q)}{s}-h
  &\ge\sup_{k\ge t} \frac{H(X_{k+1}^{k+sq}|Q)}{sq}-h
    \ge
    \sup_{k\ge t} \frac{H(X_{k+1}^{k+sq}|X_1^t)-A_H^-c}{sq}-h
    \nonumber\\
  &\ge
    A_H^-f(sq,t,c)
    \ge
    A_H^-\okra{t^{\beta-1}\okra{\frac{1-y}{1+y}}^{1-\beta}
    -\frac{c}{t}\okra{\frac{1-y}{2y}}}
    .
\end{align}
Complementing the above inequality with the analogous development for
conditions $s\le r(0,n)$ and $q=\ceil{r(0,n)/s}$ yields
\begin{align}
   \sup_{k\ge t} \frac{H(X_{k+1}^{k+s}|Q)}{s}-h
  &\ge\sup_{k\ge t} \frac{H(X_{k+1}^{k+sq}|Q)}{sq}-h
    \ge
    \sup_{k\ge t} \frac{H(X_{k+1}^{k+sq})-A_H^-n}{sq}-h
    \nonumber\\
  &\ge
    A_H^-f(sq,0,n)
    \ge
    A_H^-\frac{2^{\beta}-1+\beta}{2}
    \okra{\frac{n}{1-\beta}}^{1-\frac{1}{\beta}}
    .
\end{align}
Hence, by the source coding inequality (\ref{SourceCoding}), we
recover the claim (\ref{NeuralConditional}).   

\bibliographystyle{abbrvnat}

\setlength{\bibsep}{0pt}

\bibliography{0-journals-abbrv,0-publishers-abbrv,ai,ql,mine,math,tcs,books,nlp}

\end{document}